\newcommand{\checkOneCol}{\ifthenelse{\numcol = 1}}
\newcommand{\numcol}{2}
\renewcommand{\iff}{\ensuremath{\mathrm{iff}}}
\newtheorem{lemma}{Lemma}
\begin{document}
\title{Cooperation Optimized Design for Information Dissemination in Vehicular Networks using Evolutionary Game Theory}
\author{Abhik Banerjee, Vincent Gauthier, Houda Labiod, Hossam Afifi
\IEEEcompsocitemizethanks{\IEEEcompsocthanksitem Abhik Banerjee is with the Department RST, CNRS SAMOVAR, Telecom SudParis, Institut Mines Telecom. 9 rue charles Fourier, 91290 Evry, France. E-mail:  abhik.banerjee@ieee.org
\IEEEcompsocthanksitem V. Gauthier, and H. Afifi are with the Department RST, CNRS SAMOVAR, Telecom SudParis, Institut Mines Telecom. 9 rue charles Fourier, 91290 Evry, France.\protect\\ E-mail: \{vincent.gauthier, hossam.afifi\}@telecom-sudparis.eu
\IEEEcompsocthanksitem Houda Labiod is with the Department INFRES, CNRS LTCI, Telecom ParisTech, Institut Mines Telecom. E-mail: labiod@telecom-paristech.fr
\IEEEcompsocthanksitem To whom correspondence should be addressed.\protect\\ E-mail: vincent.gauthier@telecom-sudparis.eu}
\thanks{}
}
	
%
%
%
%
%
%
%
\IEEEcompsoctitleabstractindextext{%
\begin{abstract}
We present an evolutionary game theoretic approach to study node cooperation behavior in wireless ad hoc networks. Evolutionary game theory (EGT) has been used to study the conditions governing the growth of cooperation behavior in biological and social networks. We propose a model of node cooperation behavior in dynamic wireless networks such as vehicular networks. Our work is motivated by the fact that, similar to existing EGT studies, node behavior in dynamic wireless networks is characterized by decision making that only depends on the immediate neighborhood. We adapt our model to study cooperation behavior in the context of information dissemination in wireless networks. We obtain conditions that determine whether a network evolves to a state of complete cooperation from all nodes. Finally, we use our model to study the evolution of cooperation behavior and its impact on content downloading in vehicular networks, taking into consideration realistic network conditions.
\end{abstract}

\begin{keywords}
Wireless Networks, Ad Hoc Networks, Vehicular Networks, Cooperative Networking, Evolutionary Games.
\end{keywords}}

\maketitle

\IEEEdisplaynotcompsoctitleabstractindextext

\IEEEpeerreviewmaketitle

%
%
%
%
%
%
%

\section{Introduction}\label{sec:intro}
Packet delivery in multi-hop wireless networks depends critically on the interaction between individual nodes. As successful transmission of a packet from a source to a destination requires forwarding by intermediate nodes, it is necessary to understand how node cooperation can be engendered and sustained. This becomes more crucial when considering future deployments of multi-hop wireless networks such as vehicular networks which are characterized by high scalability and dynamicity. 

In this paper, we are interested in understanding the evolution of node cooperation behavior in a wireless network. We are particularly interested in studying how node cooperation behavior evolves in dynamic scenarios such as that of a vehicular networks. Our primary motivation is that node decision making in such contexts needs to be spontaneous and myopic in nature, since the available information is limited to locally available information. Our approach is inspired from existing studies in Evolutionary Game Theory which study evolution of node behavior in similar contexts.

Our motivations stem from the fact that the current trend in game theoretical approach in wireless networks mostly focus on optimization issues based on technical criteria \cite{Chiasserini2003} (e.g. minimization of the energy spending, and traffic maximization). However, in practice the cooperative behavior might not be only driven by algorithmic incentives but also driven by user's behaviors that might just decline to participate in the information relaying process as in peer to peer networks where some users ("Free-Riders"), choose not to engage for themselves in the forwarding process. Beyond this decision based on the local behavior of nodes, there is an intricate relationship between the ratio of cooperator in the network and the information dissemination process characterized by a phase transition (e.g. too few forwarders will lead to a very restricted diffusion). To face this issue the network as a whole needs to maintain the sufficient conditions for diffusion to take place. This collective behavior need to be addressed as well as the local behavior of a node that decides or not to participate to the forwarding process regardless if the decision process is taken base on technical, or user centric concern or a mix of both. The Evolutionary Game Theory seems perfectly suited to address these issues: first because the outcome of the game is derived from the local decision and secondly by the social pressure could be expressed as the ratio of people sharing the same behavior.

\subsection{Evolutionary Game Theory}\label{subsec:egt-ovw}
Evolutionary game theory \cite{Nowaka,Weibull1997} is a general theoretical framework that can be used to study many biological problems including but not limited to host-parasite interactions \cite{Nowak1994}, eco-systems \cite{Friedman1991}, animal behavior \cite{Smith1973}, social evolution, and human language \cite{Nowak2002}. Evolutionary games arise to study population dynamic where the fitness of individuals is not constant, but depends on the relative abundance of strategies in the population pool. Originally intended for studying cooperative interactions in biological systems, EGT studies how the strategy adopted by a node evolves as a result of its interactions with others. The framework  have recently included feature like stochastic dynamics, finite population size, and structured populations. Indeed, the evolution of network characteristics as a function of interactions between individual nodes has become the primary focus of recent research on Evolutionary Game Theory (EGT) \cite{Lieberman2005,Ohtsuki2006} (cf. Fig. \ref{fig:intro} for a example of EGT with a structured population). A salient property to notice is EGT doesn't suppose that all of the players make rational choices (i.e. a player in a EGT does not necessary adapt his strategy as function of his opponents' strategies, or don't act only toward their self-interest) as oppose to classical game theory. As result in EGT we are more interested in studying the condition for achieving an evolutionary stable state (ESS) which is akin of Nash equilibrium \cite{Nowaka}. The survival of a strategy in a network depends on the benefits achievable by it in comparison to other strategies. It is a well known result from EGT that in unstructured populations (random encounter between nodes), natural selection favors defection over cooperation, as opposed with model with a structured population (i.e. on graphs) where cooperation is favored over defection \cite{Nowak2004a, Ohtsuki2006, Lieberman2005, Santos2008}. Indeed, in structured population model where the interactions are constrained either by spatial or social relationships the
emergence of cooperative behavior is favored when the altruistic acts exceed the connectivity of a node, thereby making the cooperation on graph a valuable option in case of mild connectivity. This fact is especially true in the case of scale free graph \cite{Barabasi1999}. Another important aspect is the problem of how the survival of cooperation in a social system depends on the mobility of nodes in the graph, in \cite{Meloni2009} authors show that cooperation can survive given that both the temptation to defect and the velocity at which agents move are not too high. In \cite{Roca2011} authors provide new insight on a similar problem but in the case of Public Good Games instead of the Prisoner's Dilemma. 

\subsection{Motivation for Using Evolutionary Game Theory to Study Node Cooperation Behavior}\label{subsec:motiv-egt}
Traditional game theory based approaches to improve packet delivery in wireless networks typically involve design of utility functions that depend on multiple parameters. In such scenarios, arriving at an optimal configuration either involves sufficient amount of available information or learning over time.

Existing research on improving cooperation among wireless nodes seeks to do so either through incentive and punishment mechanisms or by design of detailed utility functions. While extensive literature exists on the subject of game theory in wireless multi-hop networks, we draw attention to a few that are similar in motivation to ours. Ng et al. proposed a architecture in \cite{NgPunishment} in which monitor nodes impose punishments on selfish nodes to induce cooperation. Focusing on primarily on unicast packet routing in vehicular ad hoc networks (VANETs), Chen et al. \cite{ChenCoalGm} propose an incentive mechanism based on coalitional game theory which rewards nodes that take part in packet forwarding. Concentrating also on VANETs, Schwartz et al. studied utility function based approaches for improving data dissemination in \cite{schwartz2011analysis}. The authors analyze utility functions which incorporate message characteristics such as priority, distance to its location and age along with mobility information of the vehicle such as route information. A data selection algorithm based on similar utility functions was proposed in \cite{schwartz2012achieving} with a view to achieving fair distribution of data among vehicular nodes. Shrestha et al. consider separate utility functions for Road Side Units (RSUs) and On Board Units (OBUs) and propose an algorithm based on Nash Bargaining Solution to achieve large scale data dissemination in VANETs \cite{shrestha2008wireless}.

Contrasted to existing literature, we are interested in studying network scenarios where nodes do not possess enough information to construct detailed utility functions. Our objective here is to focus on how node behavior evolves in scenarios in which, not only nodes have limited information about the network, but their interaction with partners nodes also change frequently. In such a scenario, it seems intuitive for nodes to make myopic decisions based on instantaneously available information. In the absence of global network information, it is plausible for nodes to choose strategies which may only seem to be the best, based on the limited information available, but are not optimal from the standpoint of the entire network. The attractiveness of EGT for vehicular networks stems from its ability to adapt to various network parameters that change over time. Further, there is a greater need for cooperation among nodes in a vehicular network for applications requiring dissemination of information to all nodes. In such scenarios, while node cooperation is vital to ensuring maximum coverage for the disseminated information, the rational choice for any node is to receive such information but not to cooperate by forwarding it, since there are no gains to be received by doing so. Compared to classical game theory, an interesting aspect of EGT is that it provides a way to understand how node strategies can thrive even when such behavior is not rational in terms of received payoffs. 

A topic of particular interest for the study of cooperation is the Public Goods Game (PGG), in which the contributions from cooperating nodes are distributed equally among all nodes in the group. The rational behavior for an individual is to defect and not contribute anything to the public good, resulting in the tragedy of commons \cite{hardin2009tragedy}. However, despite defection being the Nash Equilibrium, it is observed that cooperation can not only survive in a wide range of situations but even dominate at times. The conditions for cooperation in PGGs has seen extensive research focus \cite{hauert2004dynamics,hauert2006evolutionary,cardillo2012velocity}. However, to our knowledge, PGGs have not been studied in the context of various wireless ad hoc network deployments such as VANETs. PGGs are suitable for the study of node behavior in wireless networks due to the prevalence of group interactions. Further, as with PGGs, nodes in wireless networks are presented with a natural incentive to defect.

\begin{figure*}
\centering
\includegraphics[width=6in]{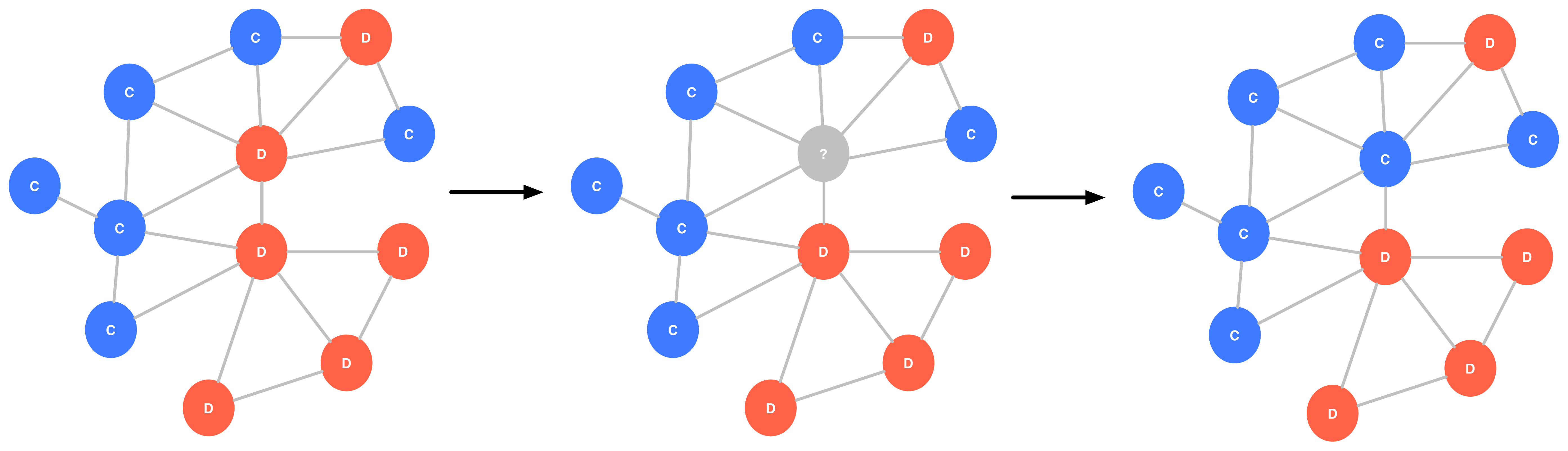}
\caption{Example of Evolutionary Game in a structured population. The rules of the game. Each individual occupies the vertex of a graph and derives a payoff according to the game rules and from interactions with adjacent individuals. In a next step the selected node change it behavior according to the outcome of the previous game played.}
\label{fig:intro}
\end{figure*}

\subsection{Contributions}\label{subsec:contrib}
In this paper, we seek to understand node cooperation behavior in wireless networks using evolutionary game theory. We first propose a public goods games (PGG) framework for wireless networks. The key feature of our proposed model is to adapt existing designs of PGGs to the  unique constraints of a wireless network. Specifically, our model takes into consideration two important limitations of wireless networks. Firstly, we consider that nodes only posses information from the immediate neighborhood. Secondly, contrasted to existing definitions of PGGs, benefits of cooperation are also realized only from one hop neighbors. Conversely, we also make a crucial observation that wireless multi-hop networks have an intrinsic ability to boost cooperation due to implicit sharing of information as a result of wireless broadcast advantage \cite{wieselthier2000construction}. Subsequently, we use our model to study node cooperation behavior for the problem of information dissemination. Finally, we focus on a realistic problem setup of content downloading in vehicular networks and study how node cooperation behavior evolves in such a scenario. 
 
\subsection{Related Work}\label{subsec:rel-work}
Sharing a similar motivation as ours, existing research on wireless networks has seen some focus on algorithm design and analysis based on evolutionary game theory. In \cite{tembine2009evolutionary}, Tembine et al. used evolutionary dynamics to analyze node behavior over time in the context of two problems, namely channel access in slotted ALOHA systems and decentralized power control. Evolutionary coalition games were used to study the problem of network formation in \cite{tembine2010evolutionary} and \cite{khan2011evolutionary} while dynamic routing games were studied in \cite{tembine2011dynamic}. In their work  in \cite{cheng2011ecology}, Cheng et al. studied the evolution of misbehavior among secondary users in a cognitive radio network. 

A key aspect of our work which distinguishes it from existing research is the fact that we study node cooperation behavior in the context of information dissemination in wireless networks. Our approach aims to identify the mutual impact that the evolution of node cooperation and the information dissemination process have on each other. As with the problem of information dissemination in vehicular networks considered here, there exist other scenarios in which free-riding is the only rational strategy for nodes to follow. Indeed, existing research has explored the use of evolutionary game theoretic techniques to study node cooperation in scenarios such as peer-to-peer file sharing \cite{sasabe2007caching,sasabe2010user,matsuda2010evolutionary}. However, the distinguishing feature of our work is that our model of node behavior is particularly suited for wireless network scenarios since we take into consideration aspects like node interactions limited to immediate neighborhood and the intrinsic ability of the wireless medium to boost cooperation.

%
%
%
%
%
%
%

\section{Public Goods Games}\label{sec:PGG}
\subsection{Overview}\label{subsec:pgg-ovw}
Public Goods Games (PGG) is a model of interaction for groups of individuals, all of which are interested in reaping benefits from a shared public good. In a PGG, the benefit available to a particular group is divided equally among all its members. Cooperating members contribute a certain cost into the public pool. The total contributions are then multiplied by a synergy factor $r$ (discussed next) and divided equally among each individual in the group, irrespective of whether it is a cooperator or a defector. The net payoff of a cooperator, therefore, is the benefit received from the PGG minus the cost contributed. A node $i$ in a network takes part in multiple PGGs determined by the size of its neighborhood. For each PGG centered on each neighbor of node $i$, the benefit received by $i$ is determined by the number of cooperators in each such group.

An important component of the PGG is the factor $r$, termed as the synergy factor, used to account for the synergistic effects of cooperation. Synergistic effects of cooperation refer to the fact that the benefits received as a result of cooperation can be greater than the sum of the contributed costs. The synergy factor $r$ is a measure of the benefits obtained due to cooperation. Increased synergistic effects imply greater benefits received by individual nodes compared to the costs incurred and is, therefore, more likely to induce cooperative behavior \cite{Aviles1999,Kun2006}. 

Depending on the amount contributed by each cooperator, two variants of PGGs have been studied in the literature. In the simplest case, a cooperator contributes a fixed cost $c$ to each PGG it is involved with. For a group with $N$ participants and $n_C$ cooperators, the payoff of a cooperator and a defector are, respectively,
\begin{align}
\pi_C &= r \frac{n_C}{N} c - c \nonumber \\
\pi_D &= r \frac{n_C}{N} c . \nonumber
\end{align}
Each node accumulates the payoffs obtained from all its neighboring PGGs. 

A second variant of PGG considers the scenario in which a cooperator does not contribute a fixed amount $c$ for each PGG they are involved with. Instead, a cooperator $i$ reserves a total amount $c$ to all the PGGs it is involved with. Let $k_i$ denote the degree of node $i$, i.e. the number of immediate neighbors of $i$. Since, for a node $i$ with degree $k_i$, there are a total of $(k_i + 1)$ PGGs, the contribution made to a single PGG is $\frac{c}{k_i + 1}$. Subsequently, the cooperator and defector payoffs for a node $i$ from a PGG centered at node $j$ are obtained as,
\begin{align}
\pi_C &= \frac{r}{k_j + 1}\sum\limits_{x \in \mathcal{N}_j} \frac{c}{k_x + 1} {s_x} - \frac{c}{k_i + 1} \nonumber \\
\pi_D &= \frac{r}{k_j + 1}\sum\limits_{x \in \mathcal{N}_j} \frac{c}{k_x + 1} {s_x} \nonumber
\end{align}
where $s_x = 1$ if $x$ is a cooperator and $0$ otherwise, while $\mathcal{N}_j$ denotes the set of neighbors of $j$.

Based on the payoffs obtained, nodes update their strategy at each time instant by comparing their payoff values to a randomly chosen neighbor. If the chosen neighbor $j$ of a node $i$ has a higher payoff, $i$ adopts $j$'s strategy with a probability given by the following function:
\begin{equation}\label{eq:stg-upd}
P_{ij} = \frac{1}{1 + exp[(\pi_i - \pi_j)/\kappa]}
\end{equation}
where $\kappa$ denotes the intensity of selection and is usually set to $1$.

\subsection{Motivation}\label{subsec:motive-pgg}
Our choice of PGGs as the evolutionary game to formulate the above mentioned information dissemination problem is motivated by direct analogies that exist between the two. As with any peer-to-peer system, one of the primary challenges of the information dissemination problem lies in the tendency of nodes to free-ride over the contributions of others. As individual nodes are only interested in receiving information, they can choose to not act as forwarding nodes. However, while free-riding is the more profitable behavior for individual nodes, the actual algorithm performance crucially depends on the forwarding behavior of nodes. A network consisting of only free-riders soon results in no packets being received by any node.  Such behavior directly correlates to node behavior in a public goods game, in which defection promises higher payoffs than cooperation and therefore, is the more profitable strategy. However, a population of all defectors soon dies off with zero payoffs as no contributions are made. Despite the clear benefits of defection, however, cooperation is seen to thrive and even dominate in a wide range of scenarios. Existing studies on evolutionary games, thus, offer a way to identify application scenarios in vehicular networks in which cooperation behavior among nodes can be expected to thrive and those in which it does not.

%
%
%
%
%
%
%

\section{A Public Goods Framework for Multi-hop Wireless Networks}\label{sec:pggfwk-winet}
As outlined earlier, a public goods game presents an attractive method of accurately modeling node cooperation behavior for the information dissemination problem. However, the nature of packet transmissions in a multi-hop wireless network imposes restrictions which limit the amount of payoff available to a node. As a result, the classical definition of PGG cannot be used directly. In order to understand how node cooperation behavior in a multi-hop wireless network evolves over time, we formulate a public goods framework taking into consideration the constraints imposed by a wireless network in determining node payoffs. Subsequently, we use simulation results to identify the impact of increase in the synergy factor and node mobility on the level of cooperation in a network. The results presented in this section and the next focus on the node cooperation behavior in the steady state. The steady state refers to the long term behavior to which the nodes in the network converge to.

\subsection{Adapting PGG to Multi-hop Wireless Networks}\label{subsec:formul-pggwinet}
Our approach aims to define the payoff received by a node in a single time slot. We use the second variant of PGG defined earlier as a baseline since it incorporates diversified contributions from each cooperator. Such a design is suited especially for broadcast protocols in wireless networks since a single transmission from a node is sufficient to reach all its neighbors. The contribution of a cooperating node, thus, does not grow with the number of neighbors, but stays constant. Considering the transmission of a packet within a single time slot as the unit cost, this is the maximum cost a cooperator can contribute in a time slot. However, the likelihood of a node transmitting in a slot is determined by the contention in its neighborhood. The contribution of a node, thus, depends on its degree. We express the contribution of a cooperator $i$ as $\frac{c}{{k_i}+1}$, where $k_i$ is the degree of $i$ and $c$ is the constant cost per contribution, taken equal to $1$ in this paper since we consider the transmission in a time slot to be of unit cost.

A second distinguishing factor is that, as any transmission from a node can only be received by its immediate one hop neighbors, a cooperator only contributes to the game centered on itself. In the traditional definition of PGG, a node takes part in games centered on all nodes in its neighborhood and subsequently receives benefits contributed by cooperators in each of the games. Such a model is not suitable for multi-hop wireless network since contributions from nodes more than one hop away cannot be immediately realized by a node. Moreover, in any time slot, a node only makes a single contribution. In our game formulation, the benefits received by a node $i$ consist only of the contributions made by each cooperating neighbor $j$ for the game centered on itself. We express the payoff received by a node $x$ at any time slot in terms of the traditional definition of PGG as,
\begin{equation}\label{eq:pi-pggfrmwk}
\pi_x = r \sum\limits_{y \in (\mathcal{N}_x \cup x)}\sum\limits_{z \in (\mathcal{N}_y \cup y)} \frac{c}{{k_y}+1} {q_y} - \sum\limits_{y \in (\mathcal{N}_x \cup x)} \frac{c}{{k_x}+1} {q_x}
\end{equation}
Here, the inner summation of first term and the summation of the second term correspond to the games centered on all the neighbors of a node. The distinction from the traditional definition of PGG is made by the variable $q_i$ which takes the value $1$ only if $i$ is a cooperator for a game centered on itself, and $0$ otherwise. 

Note that the benefits are not divided by the neighborhood size due to the fact that only a single game is played. Here, we make the observation that, as a single transmission is sufficient for sharing information with all nodes within a neighborhood, synergistic effects exist implicitly in a wireless network. The sum of the benefits received by all nodes in a neighborhood is always likely to exceed that of the total cost incurred. Thus, the synergy factor $r$ in the above formulation represents the additional benefits that can be realized depending on network conditions. In this context, we envision $r$ as representing benefits achievable to a node in addition to the number of received packets. Thus, $r$ may be used to quantify other aspects of network flow such as traffic priority in terms of either the importance of the information content or prioritization with regard to network costs.

\begin{figure}
\centering
\includegraphics[width=0.9\columnwidth]{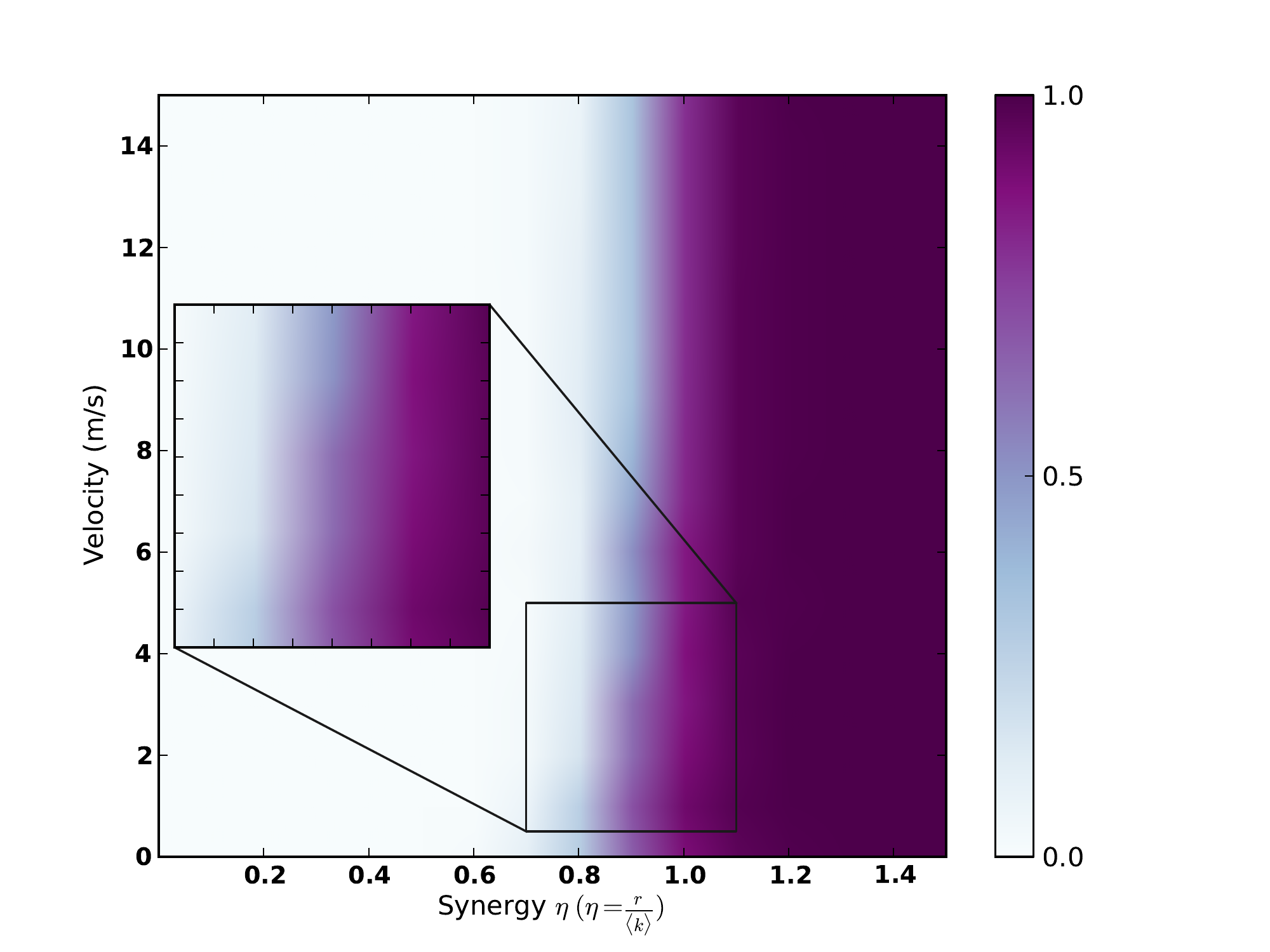}
\caption{Impact of node velocity and synergy factor on the fraction of cooperators in the steady state.}
\label{fig:pggfwk-rv}
\end{figure}

\subsection{Evolution of Node Cooperation Behavior}\label{subsec:simres-fmwk}
We obtain simulation results to identify the impact of the synergy factor and node mobility on the level of cooperation. Fig. \ref{fig:pggfwk-rv} shows the expected fraction of cooperators in the steady state. The results are presented  for increasing values of the synergy factor $r$ normalized by the average neighborhood size in the network $(\langle k \rangle + 1)$.

As with most existing studies on public goods games, cooperation is seen to dominate for high values of the synergy factor. This results form a high benefit to cost ratio, due to which the cost incurred by cooperators becomes negligible to the overall payoff received. We note that cooperation can be sustained when the normalized synergy factor $\eta \geq 0.6$ for a static network. This is similar to existing results on PGGs with diversified contributions from cooperators. However, in PGG formulation adapted to wireless networks above, two opposing factors are present. As nodes only play a single game in their neighborhood, this acts as a mitigating factor from the point of view of enhancing cooperation as nodes do not accumulate benefits from games centered on other nodes in their neighborhood. This mitigating factor, though, is countered by the implicit synergistic effects existing in a wireless network, which act to increase the benefits available to a node.

Increase in node velocity shifts the domination of cooperative behavior to higher values of synergy factor. This results from the fact that, for low to moderately high values of synergy, node cooperation behavior is more likely to survive in neighborhoods that have a higher fraction of cooperators. For increasing node velocities, neighborhood structures change frequently leading to breakdown of clusters of cooperators. This results in a higher likelihood of defection emerging as the dominating strategy. This effect vanishes at higher values of the synergy factor since the cost component of the payoff becomes negligible compared to the benefit received by a node. As a result, cooperators are less likely to switch to defection upon comparing payoffs with neighborhood defectors. The result in figure \ref{fig:pggfwk-rv} show a sharp shift from a network state where almost all the nodes have a tendency to defect to a state where almost all of them cooperate. This is suggesting that the response of the model has a tipping point, when it is passed it indicate that the network will induce propel a change in behavior into the network. This is also suggesting that once the change in behavior is triggered the system become resilient to local change in the network structure and local behavior shift. As consequence the evolution the of cooperation in the model is only being lightly affected by the velocity of node and the rate of change in the network topology.

%
%
%
%
%
%
%

\section{Public Goods Games for Information Dissemination in Multi-hop Wireless Networks}\label{sec:pgg-infodissm}
Using the basic framework of public goods games defined in the previous section, we now formulate a PGG to model information dissemination. Efficient delivery of data packets in a multi-hop network depends critically on the set of intermediate nodes which are willing to act as forwarding nodes. In dynamically changing networks such as vehicular networks, this is compounded by the changes in network topology. Our objective is to study the conditions which allow for packet delivery performance to adapt to the level of cooperation. We consider a basic information dissemination problem first, based on which we formulate the game structure and present analytical and simulation based insights on evolution of node cooperation behavior. In the next section, we use this formulation to study how node cooperation and information dissemination proceed in tandem in a more realistic setup.

\subsection{System Model}\label{subsec:sys-mod}
\subsubsection{Problem Definition}\label{subsubsec:pblm-defn}
The problem setup we consider is that of a network in which $M$ packets are to be disseminated. The packets are generated by individual source nodes, all of whom are cooperators initially. All nodes in the network are interested in receiving all of these $M$ packets. However, since a node is only interested in receiving, it may not choose to forward them to other nodes, thereby adopting a free-riding behavior. As successful dissemination of the packets relies on the number of nodes willing to forward, a tradeoff exists between node behavior and packet delivery performance.


\subsubsection{Packet Transmission Algorithm}\label{subsubsec:pkttx-algo}
The packet transmission algorithm proceeds as follows. Nodes buffer all packets they receive. Upon deciding to act as a cooperator, a node randomly chooses from the set of packets it has not yet transmitted and schedules it for transmission. If there are no new packets to be transmitted, any one of the packets is retransmitted. Allowing retransmission of packets is useful for dynamic scenarios such as in the case of mobility.

\subsubsection{Mobility Model}\label{subsubsec:mob-model}
We first use a model in which all nodes move with a constant velocity $v$ in randomly chosen directions. Thus, at any time $t$, the coordinates of a node $i$ is given relative to its position at time $(t-1)$ as,
\begin{align*}
x_i(t) &= x_i(t-1) + v \cos \theta_i \\
y_i(t) &= y_i(t-1) + v \sin \theta_i
\end{align*}
where the angle of movement $\theta_i$ is randomly chosen in each time slot.

\subsection{Formulation}\label{subsec:pgg-specifics}
While the above discussion outlines a framework for formulation of public goods games in wireless networks, we now look at individual aspects of the game formulation pertaining specifically to the problem of network wide dissemination of $M$ packets. We elaborate further on specific aspects of the game formulation. 

As outlined earlier, the benefits obtained by nodes in a PGG map to the number of packets received in the case of information dissemination. The game dynamics are determined by the payoffs received by a node in each time slot and the resulting strategy updates. Here, the behavior of information exchange in wireless networks differs somewhat from how benefits are obtained in a PGG and payoffs realized. The difference results from the fact that while the payoffs based on neighborhood contributions are instantly realized in a PGG, the same is not true with packet transmissions in wireless networks due to node contention behavior. A packet transmitted by a node can only be received correctly by its neighbors if no other nodes are transmitting simultaneously within the same neighborhood, as doing so would result in collisions. Thus, actual transmission of a packet is likely to take place later than the time when a node decides to enqueue a packet for transmission by acting as a cooperator. This leads to a dilemma with regard to how the benefits were realized. For instance, consider a node $A$ that chooses to act as a cooperator and enqueue a packet at a time $t$, which eventually gets transmitted at time $t+\tau$. In this intervening time period $\tau$, however, it is quite likely that the $A$ decides to switch its strategy based on neighborhood observations of payoffs. Thus, when the packet transmission actually takes place, $A$'s strategy is actually that of a defector or a free-rider. This could give rise to inconsistencies in the game dynamics as neighbors of $A$ may decide on their strategies based on observation of $A$ as a defector, which is a false notion since the packet transmission actually took place because of $A$'s behavior as a cooperator. Further, there is the additional question of whether, as with the traditional definition of PGGs, payoffs should be updated during the time period $(t,t+\tau)$.

We take the above mentioned factors into consideration for formulating the PGG. One way of characterizing such a scenario would be to consider the expected delay between a node's decision to cooperate and actual transmission of the packet. Alternatively, as with the variants of PGG described earlier in \ref{sec:PGG}, a node can be expected to make contributions in each time slot it is a cooperator, with the contribution obtained as the probability of transmitting in each slot. The transmission probability can be obtained in terms of the number of contending nodes, which, in a saturated scenario, is the entire node neighborhood, and thus correlates directly to the contributions from cooperators in the second variant of PGG. Considering fair random scheduling of nodes, the cost incurred by a cooperator $i$ in a single time slot can be given as,
\begin{equation}\label{eq:coop-cost}
\eta_i = \frac{c}{{k_i} + 1}
\end{equation}
where $k_i$ is the degree of node $i$. As $i$ has equal chances of transmitting in each of the $\tau_i=({k_i}+1)$ time slots, the total contribution made by a node is proportional to how long it stays a cooperator. Thus, if it stays a cooperator throughout the $\tau_i$ slots, the contribution equals $1$. Therefore, the total benefit received by a node in a single time slot is the aggregation of contributions from all cooperators in its neighborhood, and can be expressed as,
\begin{equation}
b_i = \frac{r}{{k_i} + 1} \sum\limits_{j \in \mathcal{N}_i \cup i} \frac{c}{{k_j} + 1} {s_j} \nonumber
\end{equation}
where $\mathcal{N}_i$ denotes the set of neighbors of $i$ and $s_j = 1$ if $j$ is cooperator and $0$ otherwise. A normalization factor $\frac{r}{{k_i} + 1}$ is introduced to take into account the impact of contention of the neighborhood of $i$, in addition to the contention experienced by each node $j$. Subsequently, the payoff received by a node $i$ depends on whether it is a cooperator or a defector,
\begin{align}\label{eq:pggi-pyoffs}
\pi^D_i &= b_i \nonumber \\
\pi^C_i &= b_i - \eta_i
\end{align}
An important difference of the above formulation with the traditional definition of PGG is that the payoff of a node is only the result of the game centered on itself. Such a design is due to the fact that nodes only receive packets transmitted in their immediate neighborhood. 

\begin{figure}
\centering
\includegraphics[width=.95\columnwidth]{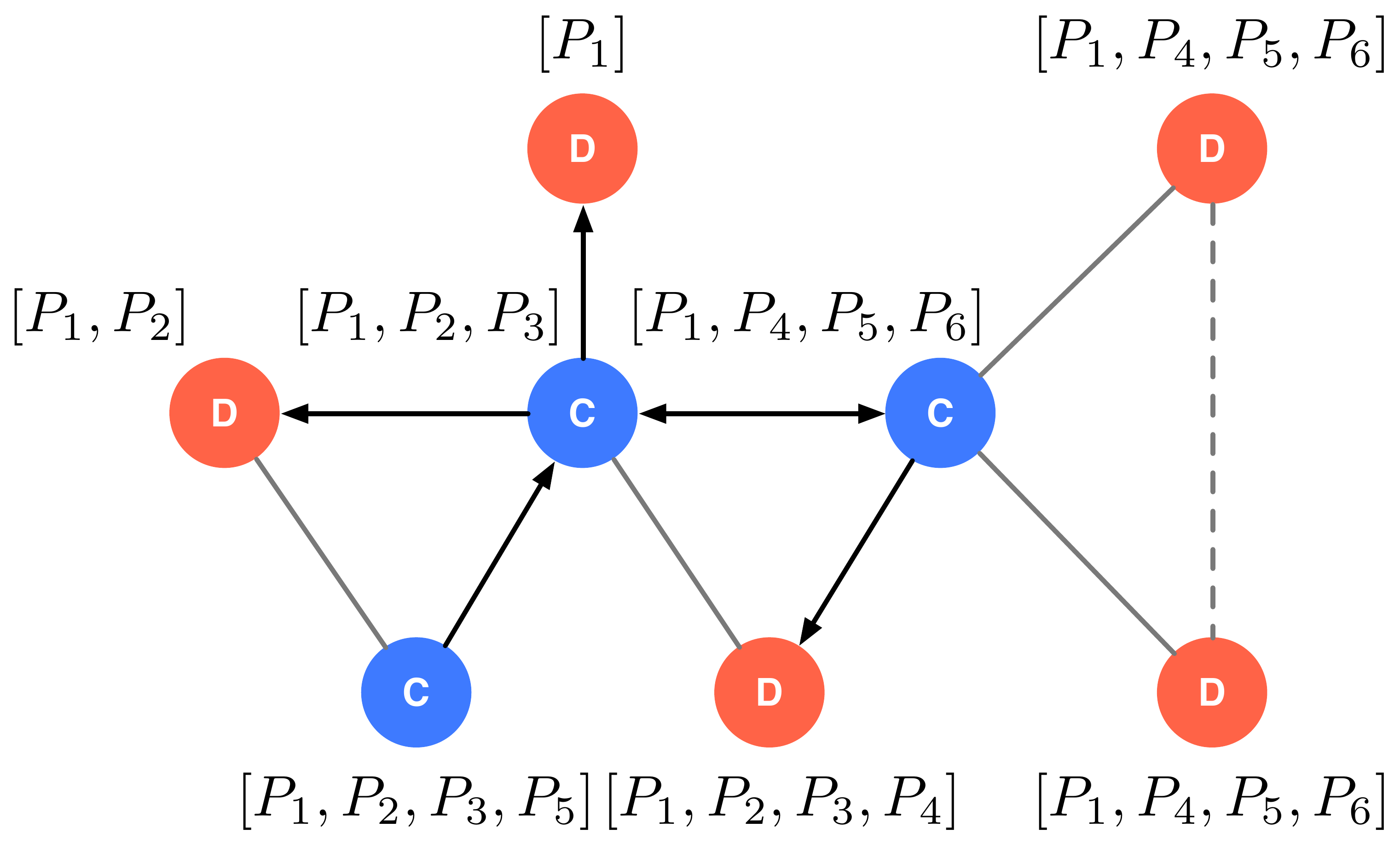}
\caption{Figure showing direction of flow of benefits from cooperators depending on the reception status of nodes in a graph. Darkly shaded lines with arrows indicate edges with flow of benefits. Lightly shaded lines indicate edges with no flow of benefits. Lightly shaded dashed lines show edges among defectors.}
\label{fig:pyfdir-grp}
\end{figure}

Next, we make the observation that, while the above formulation correlates directly with node contributions in a saturated scenario, the same is unlikely to be true for the packet dissemination problem under investigation since the set of contending nodes only includes cooperators. Thus, equation (\ref{eq:coop-cost}) can be rewritten as,
\begin{equation}\label{eq:mod-cost}
\eta_i = \frac{c}{{n^C_i} + 1}
\end{equation}
where $n^C_i$ denotes the number of cooperators in the neighborhood of $i$. As outlined earlier, the benefits in PGG correspond to the reception of packets. In a PGG centered at node $i$, the total benefit received is equal to the total of the contributions made by each cooperator multiplied by a factor $r$. In the case of the information dissemination, however, a packet received by a node is only meaningful if it has not been received earlier. Thus, the only cooperators in a node's neighborhood that matter are those that have at least one new packet to transmit. Based on this, we rewrite the benefit received by a node $i$ as follows
\begin{equation}\label{eq:nd-benefit}
b_i = \frac{r}{{n^C_i} + 1} \sum\limits_{j \in \mathcal{N}_i \cup i} \frac{c}{{n^C_j} + 1} {t_j}
\end{equation}
where $t_j = 1$ if $j$ is a cooperator and has at least one packet to transmit which has not yet been received by $i$ and $0$ otherwise. The neighborhood of $i$ is denoted by $\mathcal{N}_i$. Subsequently, the payoffs are expressed as in equation (\ref{eq:pggi-pyoffs}).

The above game formulation allows us to map the game behavior directly to that of information dissemination. The number of packets received by a node, thus, correlate to the total benefits accumulated over time by a node. 

The payoffs defined in equation (\ref{eq:pggi-pyoffs}) refer to the instantaneous payoffs corresponding to the reception performance in a single time slot. The instantaneous payoff values are used for strategy update by nodes.

\subsection{Conditions for Node Strategy Evolution}\label{subsec:condn-evoln}
To gain an understanding of how node behavior evolves, we obtain conditions under which the game formulation in section \ref{subsec:pgg-specifics} can lead to growth of cooperation in a network. To do so, we look at the neighborhood conditions that induce a node to change its behavior from that of a defector to a cooperator.

\subsubsection{Conditions for Inducing Cooperation in Static Network}\label{subsec:condn-coop}
Consider a node $i$ that acts as a defector at time $t$. The behavior of $i$ at time $(t+1)$ is determined by the difference in payoffs of itself and its neighbors. Let $k_i$ be the degree of $i$, $x_i$ the fraction of neighbors  which are cooperators and $u_i$ the fraction of cooperators that are useful to $i$. As defined earlier, $u_i$ is determined by the set of packets already received by $i$ at time $t$. The payoff of $i$ at time $t$ can, thus, be written as,
\begin{equation}\label{eq:pyfi-def}
\pi^D_i = \frac{1}{(x_i k_i + 1)}\frac{u_i x_i k_i}{(x(t) \langle k \rangle + 1)}
\end{equation}
where $x(t)$ is the fraction of cooperators at time $t$ in the network and $\langle k \rangle$ is the average node degree. Transition of $i$'s strategy to that of a cooperator at time $(t+1)$ is proportional to the difference in payoff $\pi^C_j$ of a cooperating neighbor $j$ with $\pi^D_i$. As with $i$, the payoff of $j$ is determined by the set of cooperators in its neighborhood which have packets to transmit not yet received by $j$, and can be given as,
\begin{equation}\label{eq:pyfj-coop}
\pi^C_j = \frac{1}{(x_j k_j + 1)}\frac{u_j x_j k_j}{(x(t) \langle k \rangle + 1)} - \frac{1}{(x_j k_j + 1)}
\end{equation}
The probability that $i$ becomes a cooperator is proportional to the payoff difference, which can be expressed as,
\begin{eqnarray}\label{eq:pydiff}
p_i(D \rightarrow C) & \propto & \frac{1}{(x(t) \langle k \rangle + 1)} \left( \frac{u_j x_j k_j}{(x_j k_j + 1)} - \frac{u_i x_i k_i}{(x_i k_i + 1)} \right) \nonumber \\
& &  - \frac{1}{(x_j k_j + 1)}
\end{eqnarray}
The above discussion shows that $j$ is more likely to induce cooperative behavior if it is likely to receive new packets from cooperators in its own neighborhood. Cooperative behavior is, thus, likely to grow from regions of high concentration of cooperators coupled with high diversity of source packets. Further, a high value of the synergy factor $r$ mitigates the role of the cost incurred by a cooperating node.

While the above discussion outlines the conditions that determine whether node $i$ changes its strategy, the probability $q_{ji}(t)$ that a neighbor $j$ contributes to $i$'s payoff at time $t$ depends on the set of packets already received by each node. Let $m_i$ and $m_j$ denote the number of packets already received by $i$ and $j$ respectively at time $t$. The probability $q_{ji}(t)$ is determined by the following lemma:

\begin{lemma}\label{lem:clust-imp}
The probability that a node $j$ contributes to the benefits received by a neighbor $i$ is given as

\begin{eqnarray} \label{eq:qji-clust}
q_{ji}(t) = \left\{ \begin{array}{ll} x(t) \mbox{,} &\mbox{if
$m_j > m_i$} \\
x(t) \left[1 - \left(\frac{c_j (k_j - 1)}{k_j}\right)^{m_j}\right] \mbox{,} &\mbox{otherwise}.
\end{array} \right.
\end{eqnarray}
where $c_j$ is the clustering coefficient of $j$, $k_i$ and $k_j$ are the node degrees of $i$ and $j$ respectively. 
\end{lemma}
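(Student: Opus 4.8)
The plan is to read $q_{ji}(t)$ as the probability of a joint event — that the neighbor $j$ both behaves as a cooperator and holds a packet that is useful to $i$ — which is exactly the event encoded by the indicator $t_j$ in (\ref{eq:nd-benefit}). First I would factor
\begin{equation}
q_{ji}(t) = \Pr[j\text{ cooperates}]\cdot\Pr[j\text{ holds a packet new to }i], \nonumber
\end{equation}
treating the strategy of $j$ and the contents of its buffer as independent. Since a fraction $x(t)$ of the nodes cooperate at time $t$, a chosen neighbor is a cooperator with probability $x(t)$; this supplies the common prefactor $x(t)$ seen in both branches of (\ref{eq:qji-clust}), and it remains only to evaluate the second factor.

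For the case $m_j > m_i$, I would argue by counting. If every one of the $m_j$ distinct packets buffered at $j$ had already reached $i$, then $i$ would hold at least $m_j$ distinct packets, forcing $m_i \geq m_j$ and contradicting the hypothesis. Hence $j$ carries a packet new to $i$ with certainty, the second factor is $1$, and $q_{ji}(t) = x(t)$, which is the first branch.

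For $m_j \leq m_i$ I would instead compute the complementary probability that \emph{all} $m_j$ packets at $j$ are already present at $i$. The modeling step here invokes the wireless broadcast advantage: a given packet reached $j$ from one of its $k_j$ neighbors, and $i$ holds that same packet when the transmitting neighbor is common to $i$ and $j$, i.e. closes a triangle on the edge $(i,j)$. By the definition of the clustering coefficient, a neighbor of $j$ other than $i$ is adjacent to $i$ with probability $c_j$, and there are $k_j-1$ such candidates, so the expected number of common neighbors is $c_j(k_j-1)$ and a single packet of $j$ is already at $i$ with probability $\frac{c_j(k_j-1)}{k_j}$. Taking the $m_j$ packet histories to be independent, all of them are already at $i$ with probability $\left(\frac{c_j(k_j-1)}{k_j}\right)^{m_j}$, so $j$ carries at least one useful packet with probability $1-\left(\frac{c_j(k_j-1)}{k_j}\right)^{m_j}$; multiplying by $x(t)$ gives the second branch.

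The step carrying the real modeling weight, rather than routine algebra, is the per-packet probability $\frac{c_j(k_j-1)}{k_j}$: it rests on treating a packet's provenance at $j$ as a uniform draw over $j$'s neighbors, on reading the clustering coefficient as the mean-field probability that such a neighbor is also adjacent to $i$, and on neglecting the small correction from packets delivered to $j$ directly by $i$. The subsequent independence of the $m_j$ packet histories, needed to raise the single-packet probability to the $m_j$-th power, is likewise an approximation I would state as a modeling assumption rather than derive.
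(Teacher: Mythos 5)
Your proof is correct and takes essentially the same route as the paper's: the prefactor $x(t)$ from the cooperator fraction, certainty of usefulness when $m_j > m_i$, and the complementary probability $\left(\frac{c_j(k_j-1)}{k_j}\right)^{m_j}$ obtained from the common-neighbor/clustering-coefficient argument with independent packet histories. If anything, you spell out more explicitly what the paper leaves implicit -- the counting argument behind the ``trivial'' first branch, the uniform-provenance and independence assumptions, and the neglected low-probability reception paths that the paper relegates to a footnote.
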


\begin{proof}
\begin{enumerate}[(a)]
\item The first part of the result is obtained trivially as the condition $m_j > m_i$ implies that $j$ has more packets than $i$ and the probability that it contributes to $i$'s payoff is that it is a cooperator.
\item When $m_j \leq m_i$, $q_{ji}(t)$ is the probability that $j$ receives at least one packet not received by $i$ in the time period $[0,t]$. To understand this, we make observations about how packet dissemination proceeds in a static network. As the neighborhood of a node does not change with time, any broadcasted packet can only be received by a node if it is forwarded by any one of its neighbors. Two neighboring nodes $i$ and $j$ can, therefore, receive the same packet if it is broadcasted by a node which is a neighbor to both. Alternatively, $j$ can receive a packet from a node which is not a neighbor to $i$ and therefore, $i$ does not receive it. \footnote{There is a third possibility here, which is that $i$ receives the same packet from another neighbor $a$ which is not a neighbor of $j$. However, this implies that the packet reached both $j$ and $a$ without reaching $i$. The probability of this scenario is quite low \cite[Fig. 1]{Banerjee2012} and hence we do not include it as part of this discussion.} Thus, the probability that both $i$ and $j$ receive the same packet from a node $b$ is the probability that $b$ is a neighbor to both, which can be given in terms of the clustering coefficient as
\begin{equation}
p^{common}_{ij} = c_j \frac{(k_j - 1)}{k_j} . \nonumber
\end{equation}
Thus, the probability that all packets received by $j$ are also received by $i$ is $(p^{common}_{ij})^{m_j}$. The probability that $j$ contributes to $i$'s payoff is that it has received at least one packet which has not yet been received by $i$ and thus, equation (\ref{eq:qji-clust}) follows.
\end{enumerate}
\end{proof}

To further understand the implication of Lemma \ref{lem:clust-imp} on node cooperation behavior, we make the observation that node interactions in this context are comprised of two aspects, namely, (a) the influence a node has on the payoff of its neighbors, (b) the influence of a node on the changes in strategy of its neighbors. Lemma \ref{lem:clust-imp} directly shows how observation (a) is impacted by the neighborhood structure of a node, measured as the clustering coefficient. However, a further inference can be drawn from Lemma \ref{lem:clust-imp} in relation to observation (b). A high clustering coefficient $c_j$ implies that $j$ shares a greater part of its payoff with its neighbors, implying low difference in payoff values. Thus, the probability $p_i(D \rightarrow C)$ in equation (\ref{eq:pydiff}), for a defecting neighbor $i$ of $j$, to switch to a cooperating behavior by comparing its payoff to $j$ redcues with increasing $c_j$. An extreme case, therefore, is when $\mathcal{N}_j \subset \mathcal{N}_i$, $\mathcal{N}_i$ and $\mathcal{N}_j$ being the set of neighbors of $i$ and $j$, in which case $j$ can never influence $i$.

\subsubsection{Conditions for Cooperation Growth with Mobility}\label{subsec:anal-mob}
In a network with mobile nodes, node cooperation behavior is impacted due to the constant churning of node neighborhoods. As discussed earlier in section \ref{subsec:condn-coop}, the evolution of node cooperation behavior in a network is determined by how the node interactions impact (a) instantaneous payoffs, (b) strategy updates.

In section \ref{subsec:condn-coop}, we noted how the clustering coefficient of a node determines node behavior in its neighborhood. In a mobile scenario, however, the clustering coefficient cannot be directly correlated as it keeps changing due to constant changes in the neighborhood of a node. Since the payoff of a node is determined by the number of useful cooperators in its vicinity, it can receive higher payoffs if it meets cooperators with new packets. Existing research has shown \cite{Grossglauser2002,Ioannidis2009} that mobility can help improve packet delivery performance by reducing the number of hops required. For the problem under consideration too, mobility can impact node cooperation behavior due to changes in the neighborhood structure of nodes. Meeting a higher number of new nodes in each time step, which are likely to have new packets not yet received by a node, is likely to result in higher payoffs compared to maintaining the same set of neighbors. We analyze how mobility impacts the set of neighbors of a node in each time step.

For the mobility model used above, the jump length for a node $i$ at any time step $t$ is a constant distance $v$. The set of neighbors of $i$ as a result of this jump at time $(t+1)$ is determined by the jumps made by other nodes in the network, also of length $v$. The payoff of $i$ at $(t+1)$ increases if a greater fraction of its neighborhood consists of nodes which were not neighbors at $t$.

The probability that any neighbor $j$ of $i$ at time $t$ is also a neighbor at $(t+1)$ is the probability that $i$ and $j$ are located close enough to communicate with each other, i.e. they are located within each other's transmission range $rad$. Considering $N$ nodes distributed uniformly over a circular region of radius $R$, the probability that a node $i$ meets an existing neighbor, $p^{nbr}_{old}$, and the probability that it meets a new node, $p^{nbr}_{new}$ are respectively obtained as,
{\footnotesize 
\begin{align}\label{eq:pmeet}
p^{nbr}_{old} &= \int_{0}^{rad} \left(\int_{0}^{2 \pi} \frac{cos^{-1}(\frac{v^2 + {rad}^2 + d^2}{2 v d})}{\pi} \frac{1}{2 \pi} d\theta \right) \frac{2 x}{{rad}^2} dx \nonumber \\
p^{nbr}_{new} &= \int_{rad}^{R} \left(\int_{0}^{2 \pi} \frac{cos^{-1}(\frac{v^2 + {rad}^2 + d^2}{2 v d})}{\pi} \frac{1}{2 \pi} d\theta \right) \frac{2 x}{(R^2 - {rad}^2)} dx
\end{align}}

where $d = \sqrt{v^2 + x^2 - 2 x v \cos \theta}$. 
\begin{figure}
\centering
\includegraphics[width=.95\columnwidth]{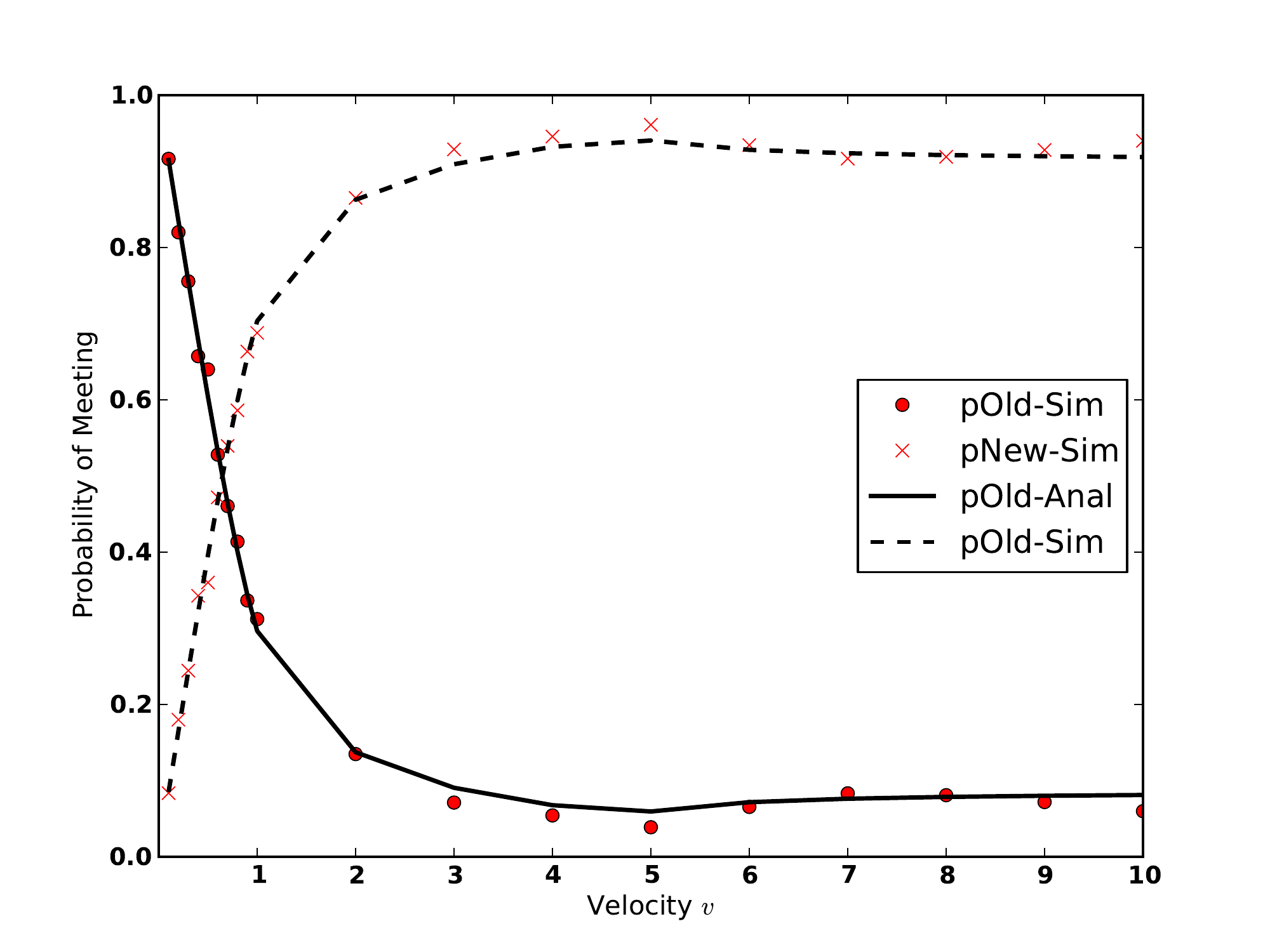}
\caption{Probability of meeting a new neighbor after making a jump.}
\label{fig:pnbrnew-constv}
\end{figure}
Subsequently, the probabilities that, among the set of neighbors of $i$ at $(t+1)$, a randomly chosen node is an old neighbor or a new one can be obtained respectively as,
\begin{align}\label{eq:pnewnbr}
f_{old} &= \frac{p^{nbr}_{old} {rad}^2}{p^{nbr}_{old} {rad}^2 + p^{nbr}_{new} (R^2 - {rad}^2)} \nonumber \\
f_{new} &= \frac{p^{nbr}_{new} (R^2 - {rad}^2)}{p^{nbr}_{old} {rad}^2 + p^{nbr}_{new} (R^2 - {rad}^2)}
\end{align}
A higher fraction of new neighbors, $f_{new} > f_{old}$, implies a higher number of useful cooperators in the neighborhood of a node resulting in a higher payoff. Any such node, thus, is more likely to influence strategies in its neighborhood. The variation of $f_{new}$ and $f_{old}$ with increase in velocity is shown in Fig. \ref{fig:pnbrnew-constv}. As can be seen, a higher node velocity increases the likelihood of meeting new nodes, which in turn can lead to higher payoffs as discussed earlier.


\begin{figure}
\centering
\includegraphics[width=.95\columnwidth]{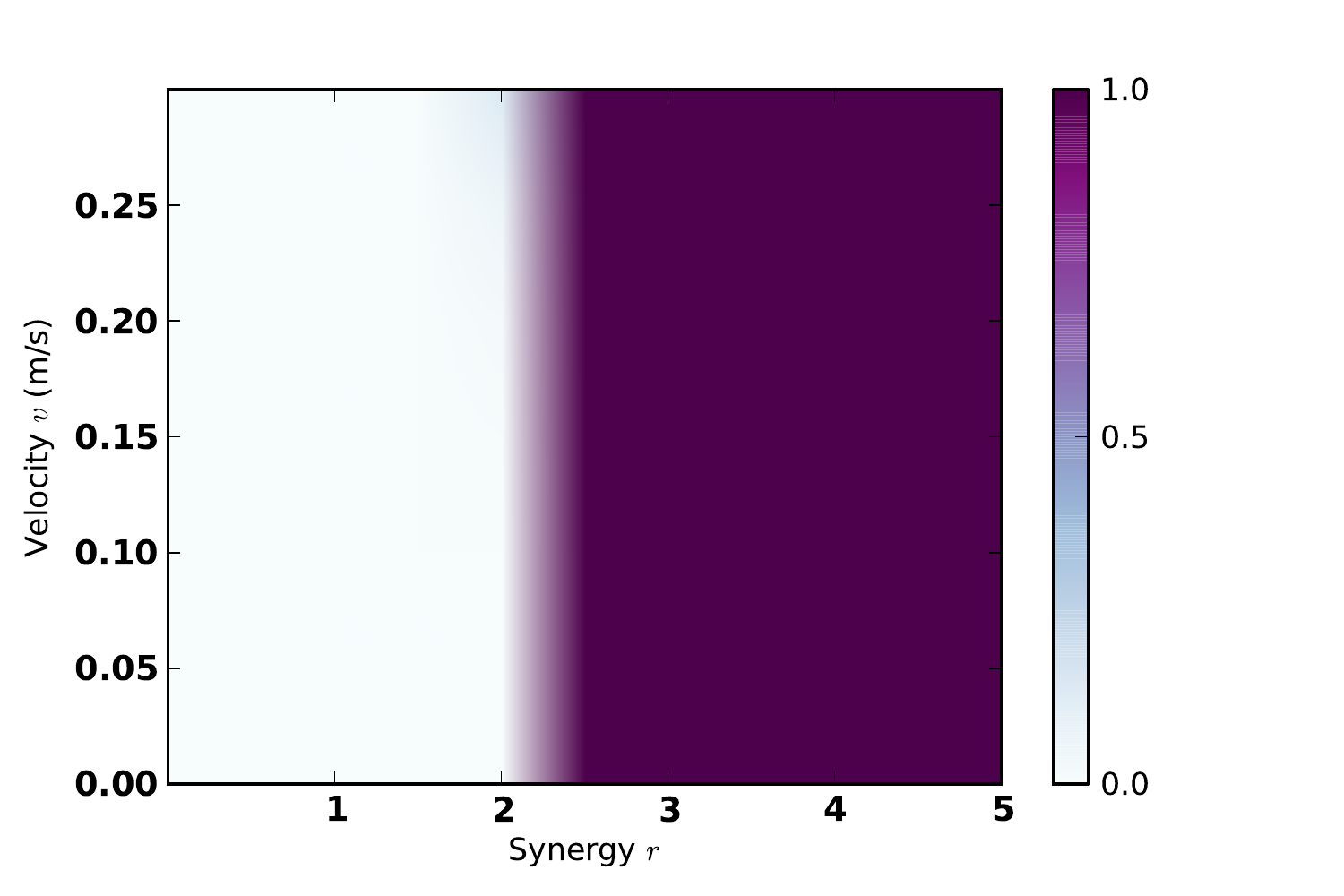}
\caption{Impact of synergy and velocity on node cooperation.}
\label{fig:fc-pgginfodissm}
\end{figure}

\subsection{Simulation}\label{subsec:pgginfodm-sim}

We use simulation results to understand the impact of node cooperation behavior on the information dissemination performance. We first run simulations on a uniformly distributed spatial network of $300$ nodes and observe the evolution of node behavior and algorithm performance over time averaged over $50$ simulation runs.

Each simulation run starts by randomly choosing a set of nodes as source nodes, each of which generates a single packet to be broadcasted to all nodes in the network. Initially, only the source nodes are cooperators. Subsequently, at each time instant, nodes determine their strategies based on payoffs obtained as per the game formulation in \ref{subsec:pgg-specifics}. Each time slot is assumed to consist of a single packet transmission. Nodes attempt to forward buffered packets only when they choose cooperation as a strategy. The payoffs are calculated as described in eq. \ref{eq:pggi-pyoffs}. The simulation is run till either all nodes use the same strategy or the information dissemination process is completed, i.e.\ all packets have been received by all nodes.

Fig. \ref{fig:fc-pgginfodissm} shows how node cooperation in the steady state is determined by values of $r$ and $v$. Compared to the basic PGG framework formulated in section \ref{sec:pggfwk-winet}, mobility is seen to enhance node cooperation significantly. This results from the fact that, in the PGG formulation adapted for information dissemination, node payoff is a function of the number of useful cooperators and not just the number of cooperators. As mobility speeds up the information dissemination process, nodes are more likely to encounter cooperators holding new packets and are thus 'useful', leading to higher payoffs and more chances of cooperative behavior.

\section{Node Cooperation Behavior for Content Downloading in Vehicular Networks}\label{sec:infdissm-vanet}
We now study node cooperation behavior in the context of content downloading services in a vehicular network. Content downloading services are likely to grow in popularity in future vehicular networks. Applications are likely to range from traffic related information such as safety and route information to user specific media content. Similar to existing studies \cite{Ioannidis2009,Malandrino2011,DiFelice2011}, we consider a scenario in which all vehicular nodes are interested in downloading content from a single service provider. The content dissemination is done by the service provider by distributing them to a set of seeder nodes. The seeder nodes then broadcast the packets to their immediate neighbors, which in turn choose to relay them to other nodes depending on their cooperation strategy. The node strategy evolution follows the same PGG formulation in section \ref{subsec:pgg-specifics}.

\subsection{Network Model}\label{subsec:cntdw-nwmod}
We consider a network of $N$ mobile nodes, all of whom are interested in receiving a set of $M$ packets from a content service provider. The nature of the content itself could depend on the type of vehicular application. For instance, a large media file could be composed of $M$ chunks. Alternatively, each packet could correspond to individual pieces of information in which all nodes are interested. We consider a scenario in which the service provider distributes all of these packets to a set of seeder nodes distributed randomly in the network. At each time slot, a seeder transmits a randomly chosen packet from the set of $M$ packets to its immediate neighbors. Dissemination to the rest of the network is determined by the relaying behavior of non-seeder nodes and therefore, depends on the evolution of node cooperation in the network. As in section \ref{sec:pgg-infodissm}, the strategy adopted by a node at each time slot depends on the payoff obtained by its immediate neighbors, determined by the number of useful cooperators in the neighborhood. 

A similar setup was considered in \cite{Li2012a}, where a small subset of nodes in the network receive content directly from the service provider. Packet forwarding is limited to a set of helper nodes, which includes nodes receiving content directly from the service provider, while the rest of the nodes in the network are marked as subscribers which only receive packets. However, the impact of evolution of node strategies is not studied in \cite{Li2012a}. We focus on how information dissemination is impacted when nodes are allowed to switch strategies depending on the payoffs received. A key difference with the scenario in section \ref{sec:pgg-infodissm} is the presence of a dedicated set of seeder nodes which always maintain their strategy as cooperators. Such a provision is necessary since seeder nodes always begin with the complete set of packets. Therefore, allowing them to change strategies using the PGG formulation would immediately result in them acting as defectors, thereby halting the dissemination process. Moreover, from the perspective of practical application, this corresponds to a scenario in which a service provider deploys seeder nodes to improve data delivery performance at receivers.

All nodes move according to the Levy Walk mobility model \cite{Rhee2008,Lee2012}. While a majority of existing research has illustrated the accuracy of Levy Walks in predicting human mobility, recent research has shown that the same is also true for vehicular mobility \cite{Li2012a,Li2012,Jiang2009}. Moreover the study of cities road networks in \cite{Barthelemy2011, Masucci2009} show that the typical distance $\ell_{1}$ between nodes intersection scale as $P(\ell_1) \sim \ell^{-\gamma}_{1}$ suggesting that Levy Walk will be a meaningful approximation (given the existence a cutoff) of the jump length distribution of cars from intersection to intersection in a roadmap. Hence, adopting such a model presents a realistic setup for studying node cooperation behavior in the presence of mobility (cf. Fig. \ref{levywalk}).

\begin{figure*}[tb]
    \centering
    \subfigure[]{\includegraphics[width=.95\columnwidth]{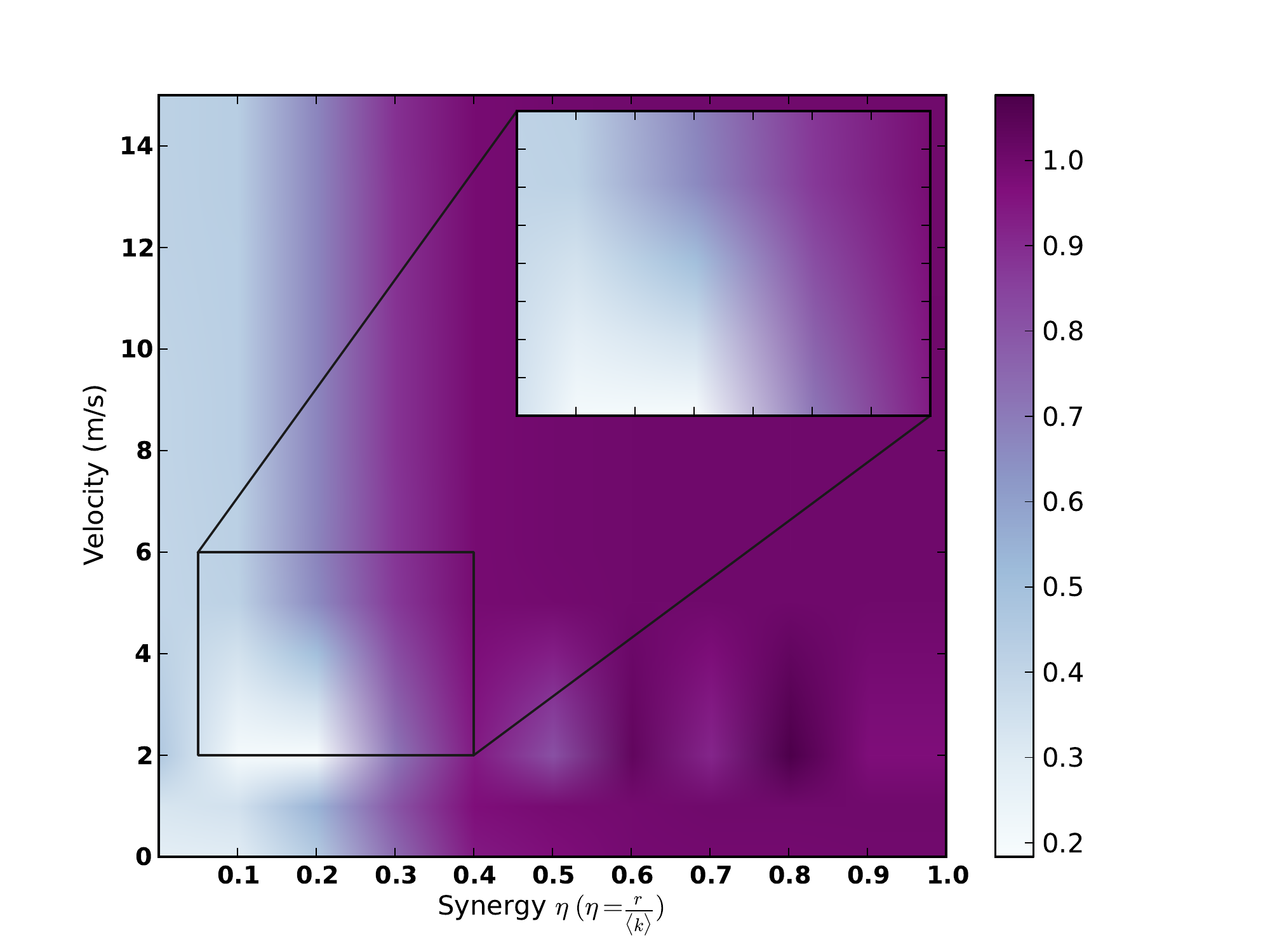} \label{subfig:cmCD30s}}
    \subfigure[]{\includegraphics[width=.95\columnwidth]{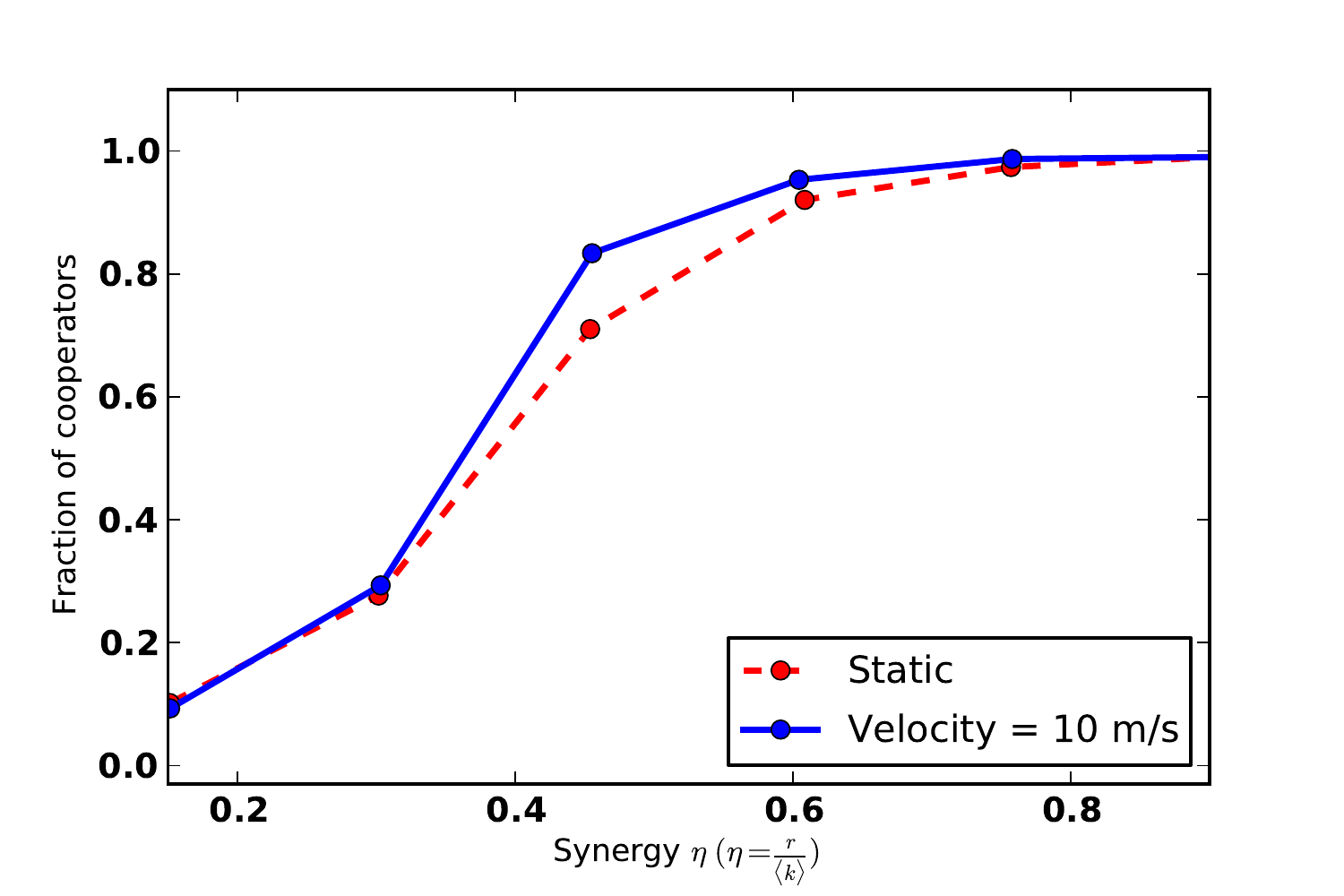} \label{subfig:cmCD30s2}}
    \subfigure[]{\includegraphics[width=.95\columnwidth]{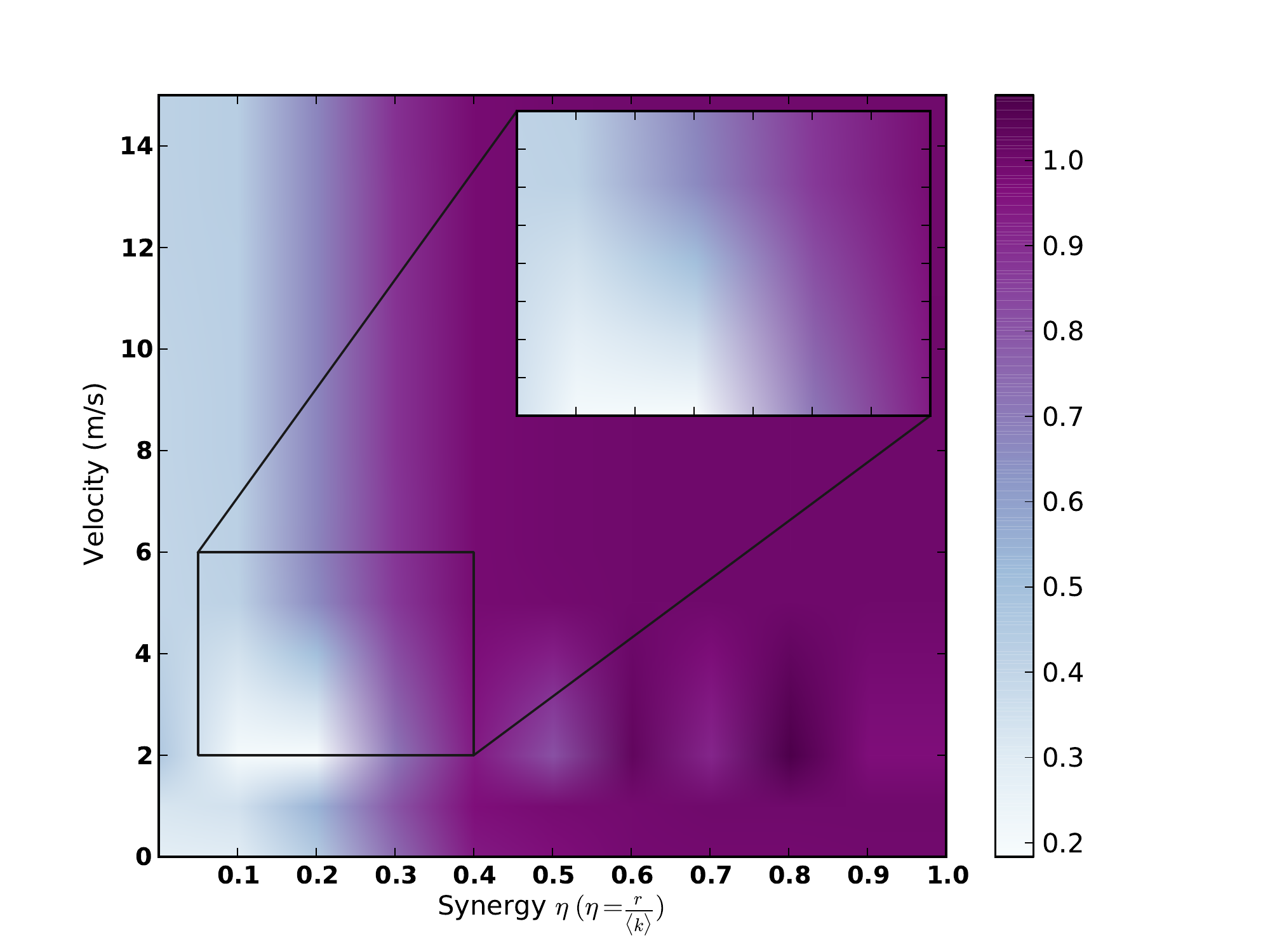} \label{subfig:cmCD60s}}
	\subfigure[]{\includegraphics[width=.95\columnwidth]{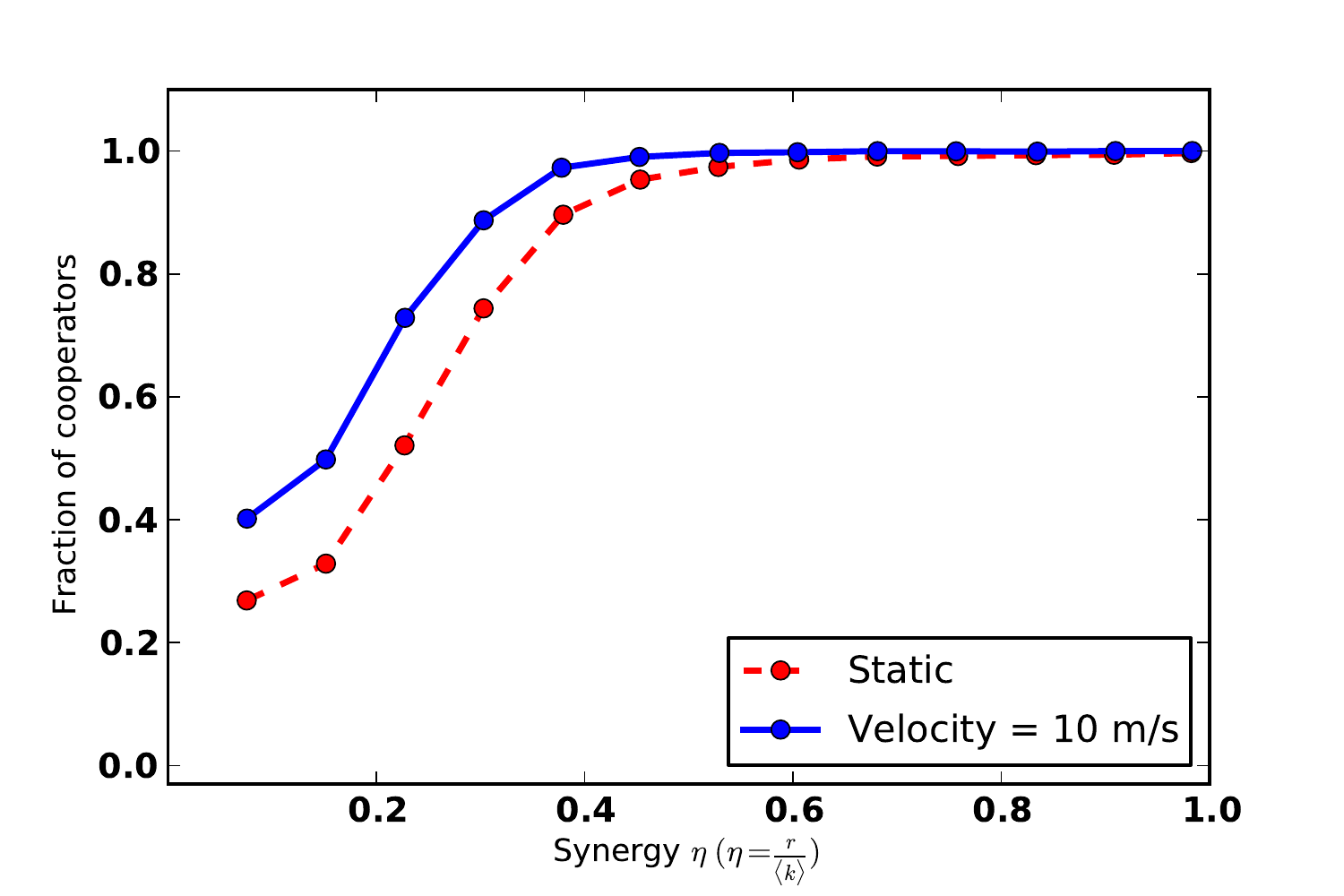} \label{subfig:cmCD60s2}}
    \caption{Node cooperation level in the steady state, (a) for the case of a networks with 30 seeders, (b) subset of the simulations with velocity $0\ m/s$ and $10\ ms/s$ for the case of 30 seeders, (c) full set of simulation for the case of a networks with 60 seeders,(d) subset of the simulations with velocity $0\ m/s$ and $10\ ms/s$ for the case of 60 seeders. The initial fraction of cooperator in Fig (b) and (d) account from the fact that the seeders are alway cooperator}
    \label{fig:cntdwld_ndcoop}
\end{figure*}

\begin{figure}[tb]
     \centering
     \includegraphics[width=.95\columnwidth]{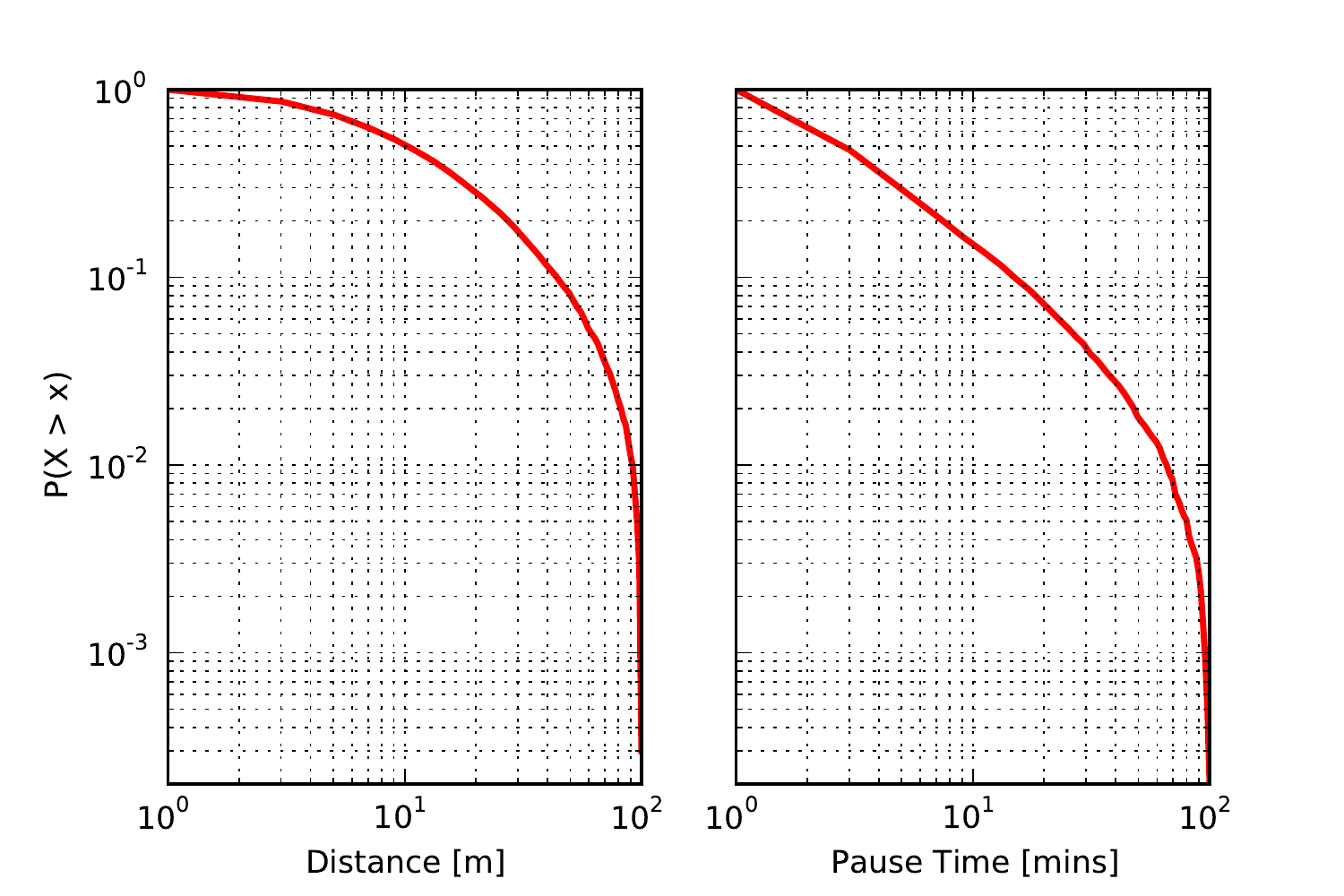}
     \caption{%
        Levy walk model used in our simulations, we plot an example with the velocity of node set to 5m/s, (cf. Table \ref{tab:parameters} for more information about the other parameters)
     }%
   \label{levywalk}
\end{figure}

\begin{figure*}[htb]
    \centering
    \subfigure[$60$ seeders]{
    \includegraphics[width=.8\textwidth,height=3in]{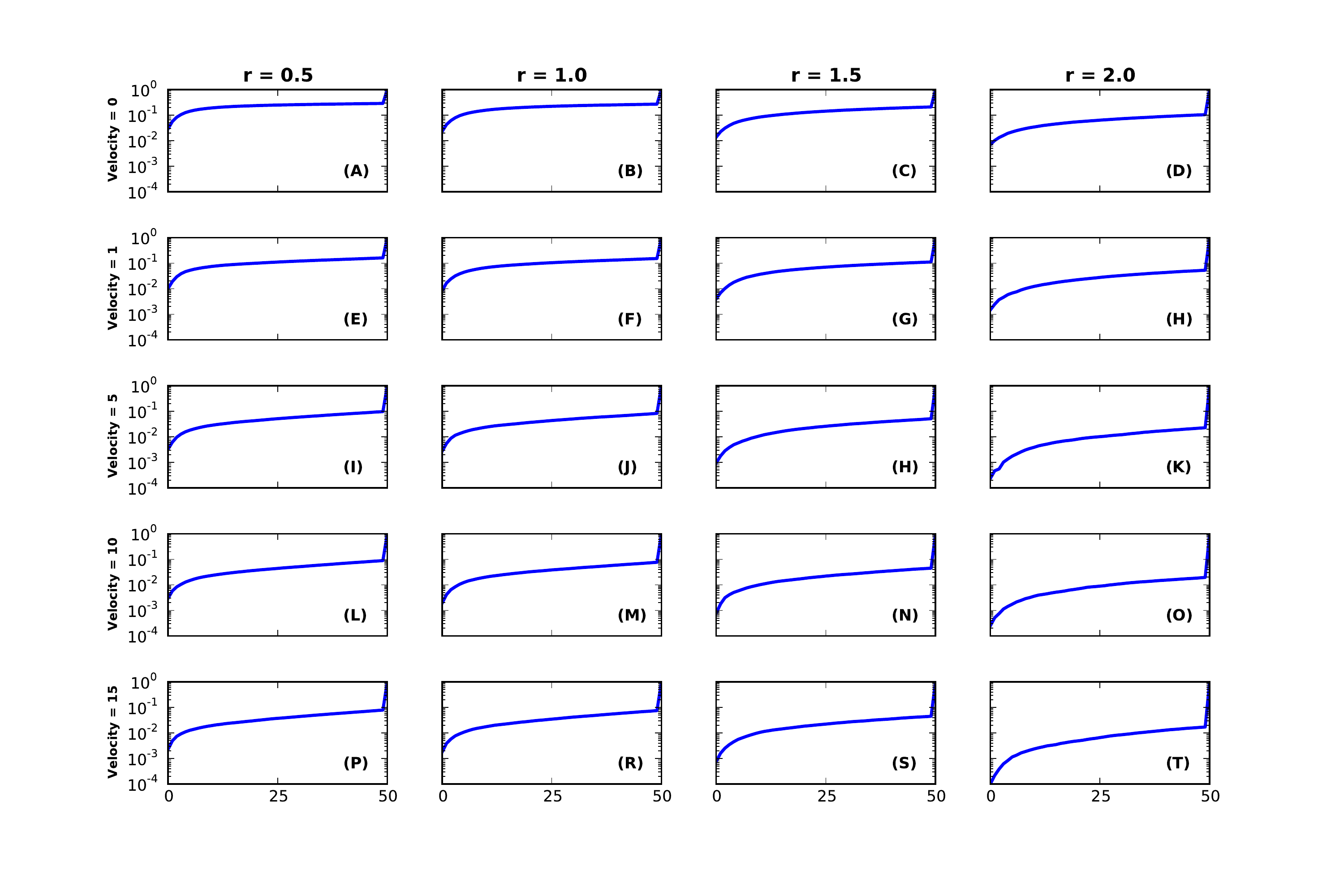}
    \label{subfig:nPD_30seed}
    }
    \subfigure[$30$ seeders]{
    \includegraphics[width=.8\textwidth,height=3in]{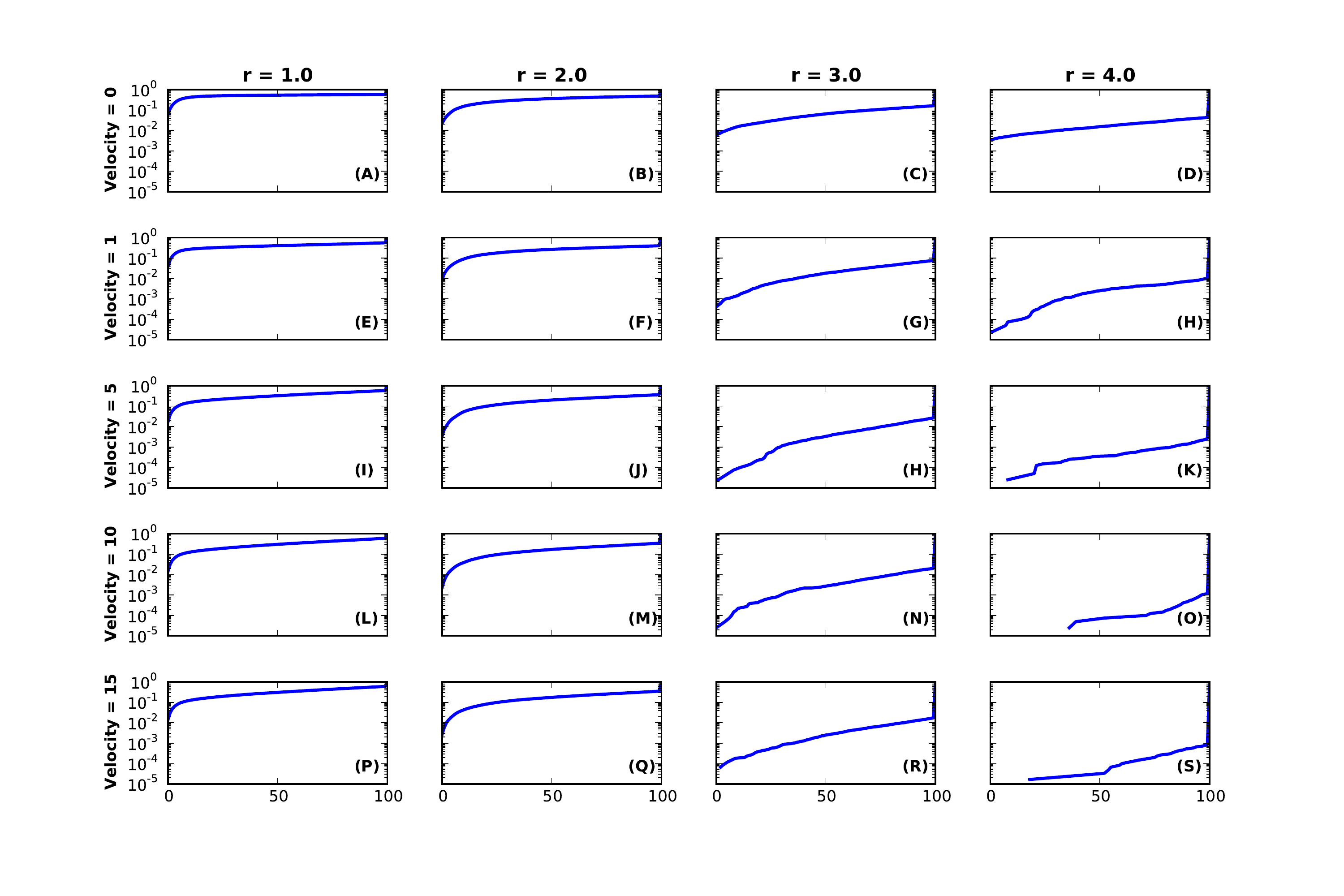}
    \label{subfig:nPD_60seed}
    }
    \caption{The figures (a) and (b) show the evolution of distribution of packets received by nodes across different velocity and synergy factor ($r$). Both figures showed that both factors greatly improve the spreading efficiency of the packets across the networks, from weak diffusion of packets in fig. (a).A to almost full diffusion in fig. (a).T (the same is true for fig (b).A and (b).S respectively). In fig	(a) the diffusion append for lower synergy factor than in fig (b) due to the fact there are twice the number of seeders. The buffer size for fig. (a) is 50 packets and for fig. (b) 100 packets. In fig (a) in the case of 60 seeders for $r > 2$ we achieved full diffusion in almost all the cases}
    \label{fig:pktDist_cntdwld}
\end{figure*}

\begin{figure}[tb]
     \centering
     \includegraphics[width=.95\columnwidth]{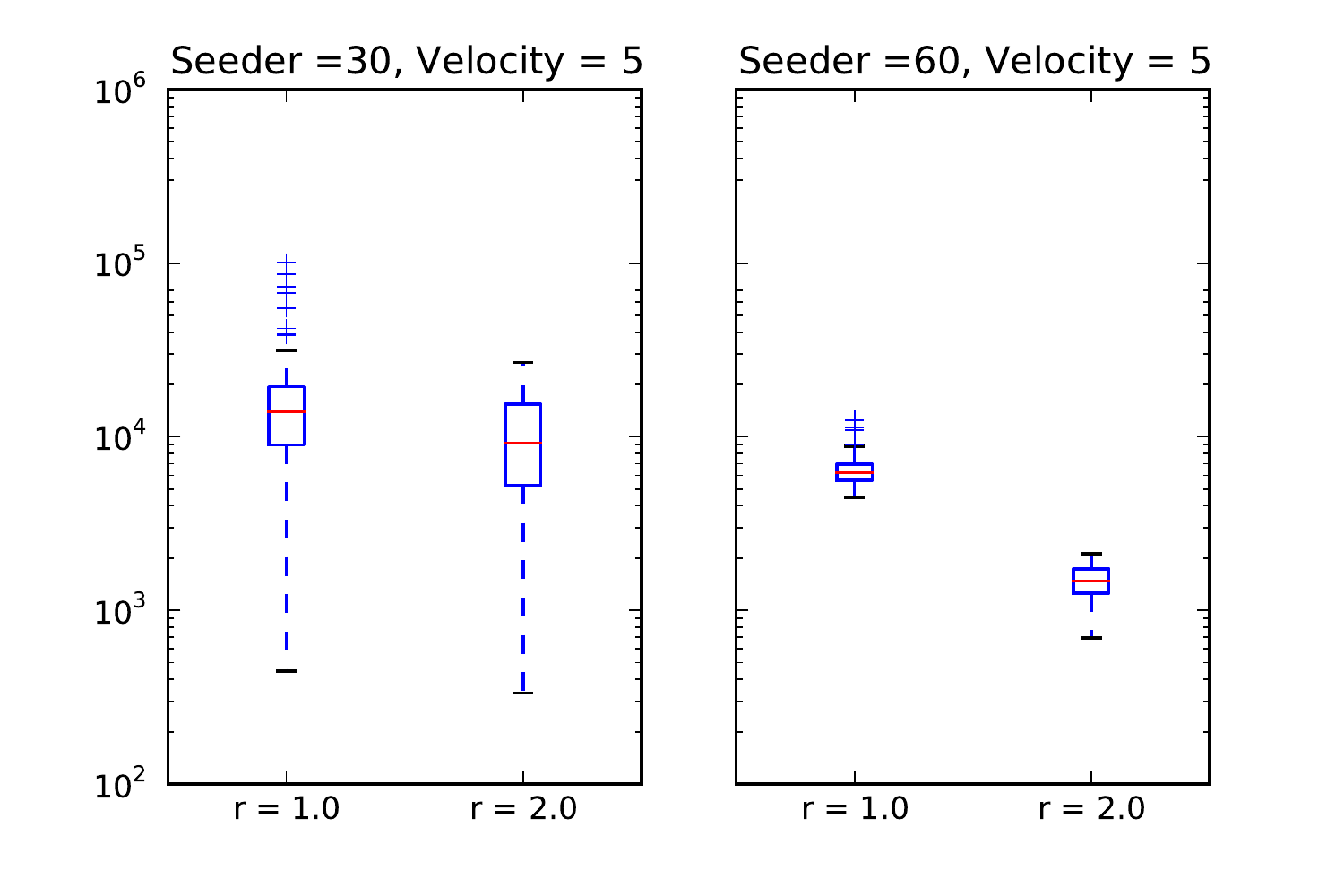}
     \caption{%
        Number of packets remaining to be delivered for different number of seeder. 
     }%
   \label{figquantile1}
\end{figure}

\begin{figure}[tb]
     \centering
     	\subfigure[]{%
            \label{fig:first}
            \includegraphics[width=.45\columnwidth]{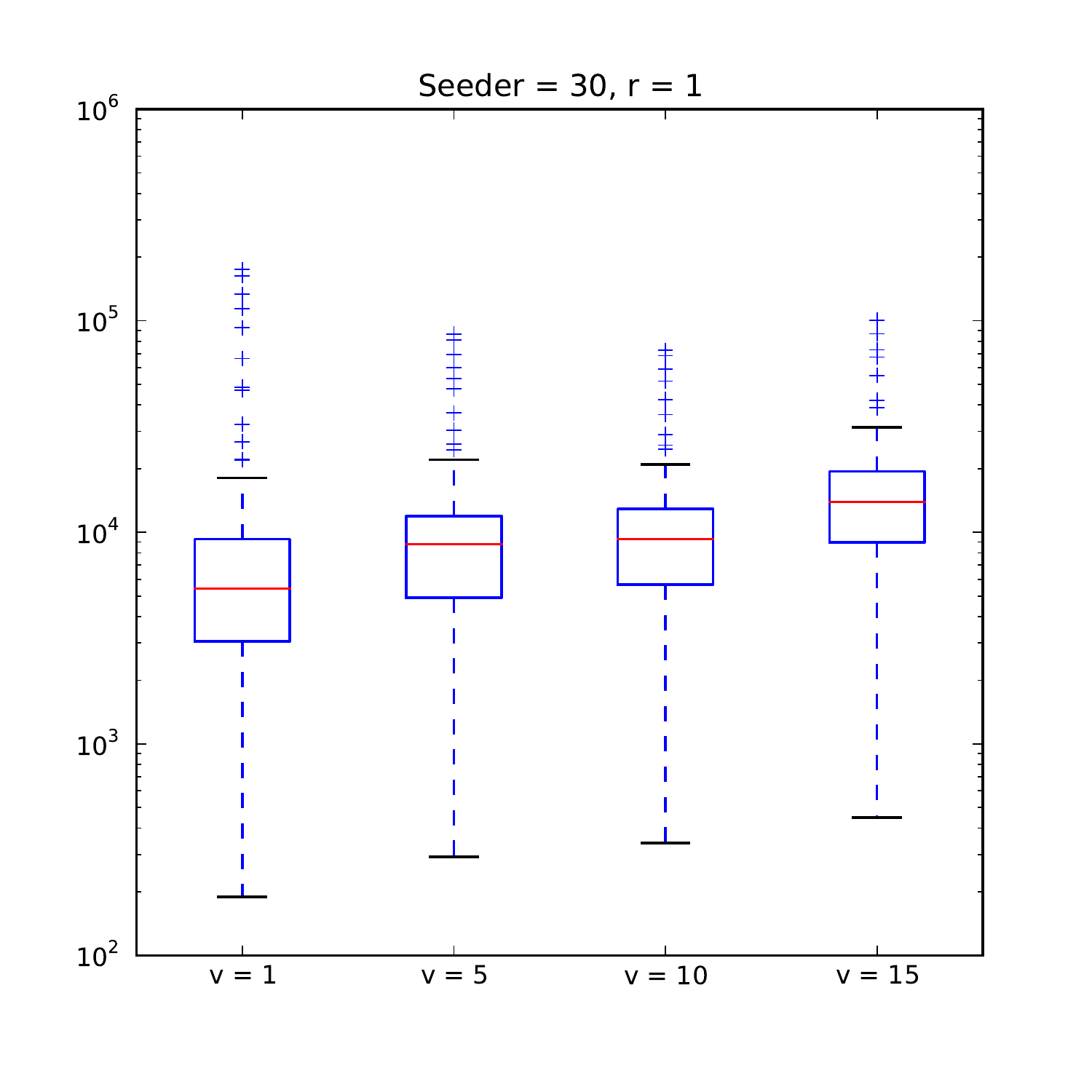}
        }%
        \subfigure[]{%
           	\label{fig:second}
           	\includegraphics[width=.45\columnwidth]{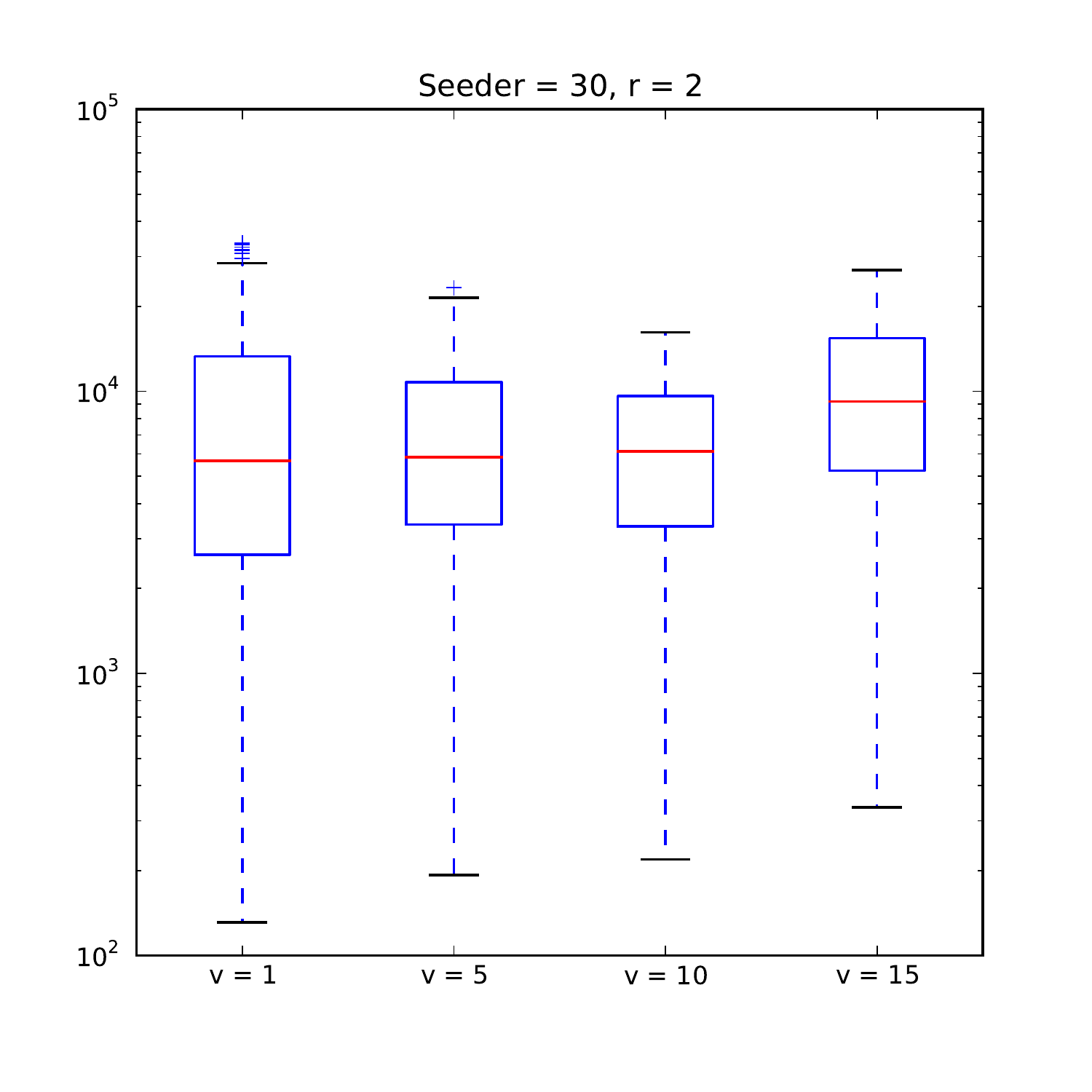}
        }
        \subfigure[]{%
			\label{fig:third}
            \includegraphics[width=.45\columnwidth]{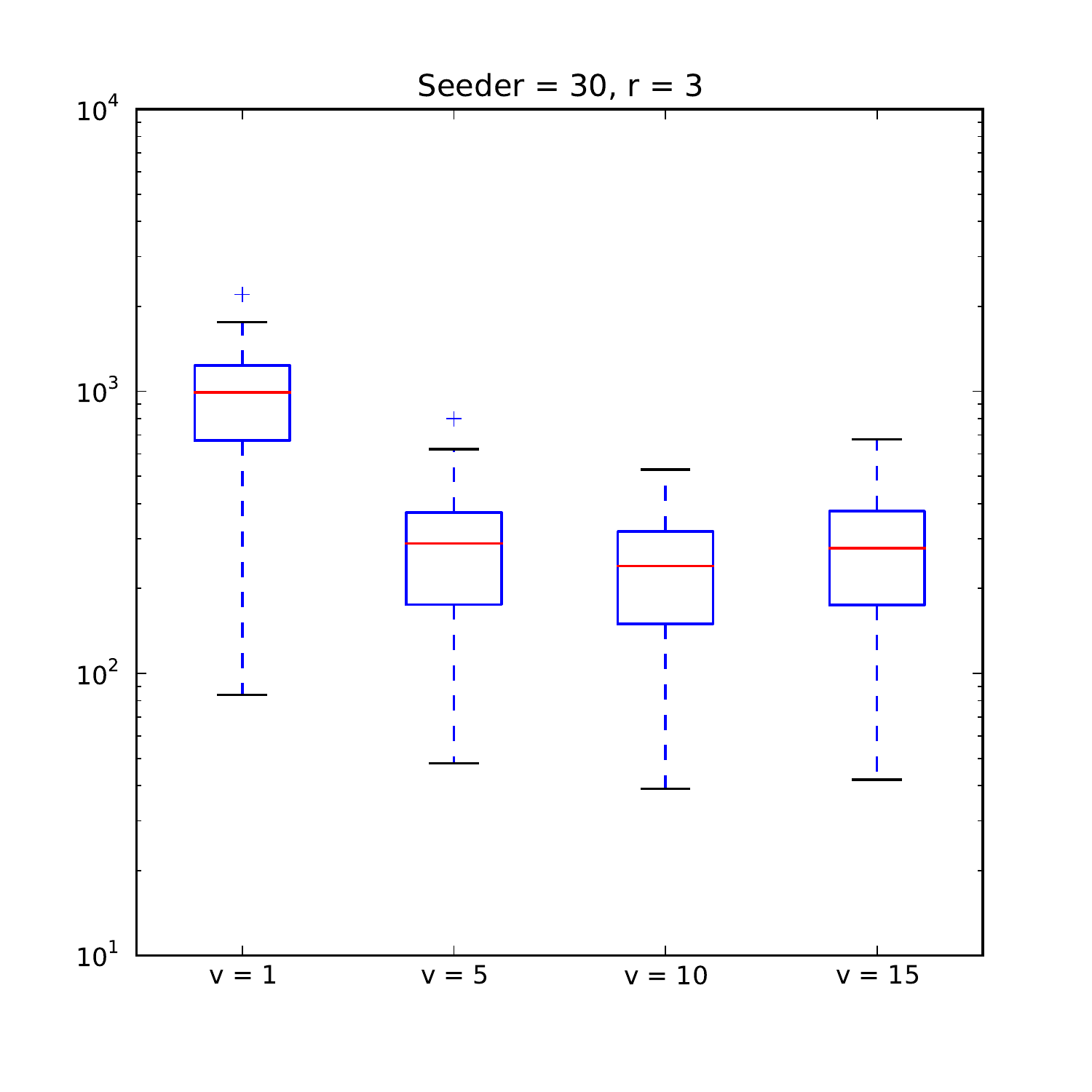}
        }%
        \subfigure[]{%
            \label{fig:fourth}
            \includegraphics[width=.45\columnwidth]{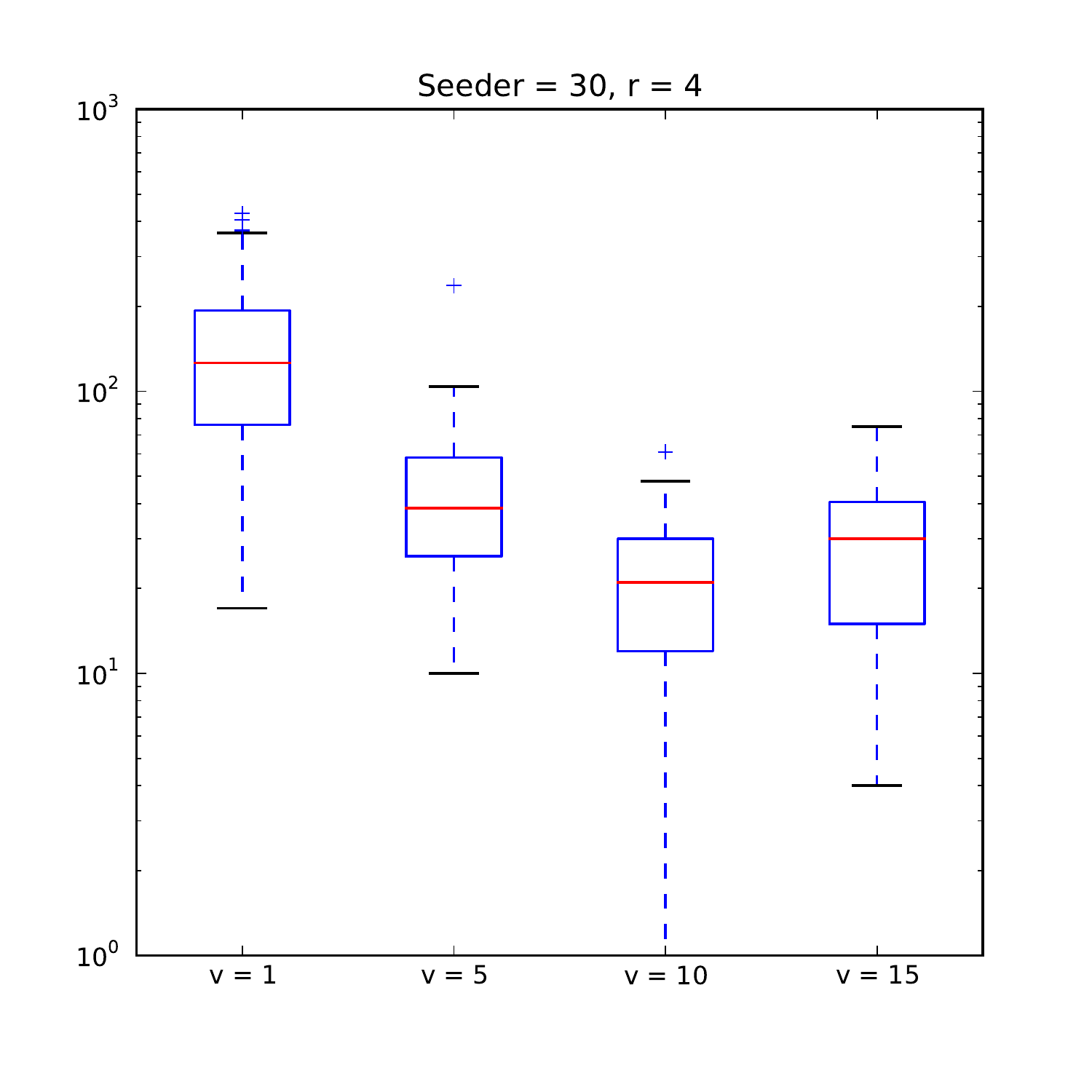}
        }%
    \caption{%
	Number of packets remaining to be delivered for various parameters $v$ and $r$ 
     }%
   \label{figquantile2}
\end{figure}

\subsection{Simulation Results}\label{subsec:simres-infodissm}
Our simulation setup consists of a network of $400$ nodes, a subset of which are seeders. We obtain results for different values of the number of seeders in the network.  We run simulations on a more realistic setup of randomly distributed nodes in which the neighborhood of each node is determined by its transmission range. We used the quasi unit disk model where $Pc$ the probability to connect to neighbor node is given relative to their distance $x$ (cf. eq. \ref{eq:pc}), 
{\footnotesize 
\begin{equation}\label{eq:pc}
   Pc(x) = 
   \left \{
   \begin{array}{ l c r}
   	 1 \ & \iff & x <  R_{inner} \\
     1-(\frac{R_{outer} - x}{R_{outer} - R_{inner}} )^\zeta \ & \iff & R_{outer} \geq   x  \geq R_{inner} \\
     0 \ & \iff & x  > R_{outer} \\
   \end{array}
   \right .
\end{equation}}

In order to allow for errors in calculation of payoffs in a realistic scenario, we include a gaussian noise parameter which introduces a certain degree of randomness,
\begin{equation}\label{eq:pggi-pyoffs-noise}
	\begin{array}{r c l}
		\pi_{i}^{D} & = & b_{i} + \mathcal{N}(0,\sigma)\\
		\pi_{i}^{C} & = & b_{i} + \eta_{i} + \mathcal{N}(0,\sigma)\\
	 \end{array}
\end{equation}
The parameter values used in the simulation are listed in Table \ref{tab:parameters}.   

%
%

\begin{table*}[!t]
\caption{Simulations Parameters}
\begin{tabular}{|c|c||c|c|}
\hline
Parameters & Value & Parameters & Value\\
\hline\hline
\multicolumn{2}{|c||}{Simulations Parameters} & \multicolumn{2}{c|}{Levy-walk Parameters}\\
\hline
Size of simulation area & 1000.0 $m^2$ & $\alpha$ & 0.9 \\
Number of node & 400 &  $\beta$ & 0.9 \\
Simulation length & 300.0 s & velocity & [0 .. 15.0] m/s \\
\cline{3-4}
Number of seeder & [30, 60] & \multicolumn{2}{c|}{Quasi-unit-disk Parameters} \\
\cline{3-4}
Buffer size & [50, 100] & $\zeta$ & 0.3 \\
\cline{1-2}
\multicolumn{2}{|c||}{PGG Parameters} & $R_{inner}$ & 40.0 m\\
\cline{1-2}
Initial cooperator ratio & 0.5 & $R_{outer}$ & 75.0 m\\
Noise variance ($\sigma$) & 0.1 & & \\
\hline
\end{tabular}
\centering
\label{tab:parameters}
\end{table*}

Fig. \ref{fig:cntdwld_ndcoop} shows the impact of the number of seeders on the node cooperation level in the steady state. Increasing the number of seeders results in a higher level of cooperation in the network. This is an expected result as more seeders implies faster dissemination of packets in the network, thereby leading to higher likelihood of finding 'useful' nodes. Thus, cooperating nodes have a higher probability of influencing neighborhood behavior. A surprising observation, however, is that, in the $\eta-v$ region in which cooperation dominates, the actual level of cooperation is higher when the number of seeders is lower. This results due to the fact that, while increasing the number of seeders results in a higher likelihood of cooperative behavior, this is accompanied by faster dissemination of information. Thus, nodes quickly receive the entire set of packets they are interested in. A small fraction of nodes are, thus, likely to stay non-cooperative. However the cooperation level remain only  mildly impacted by the  mobility of users (in the reasonable range simulated in this simulation). 

In Fig. \ref{fig:pktDist_cntdwld} we show the joint effect cooperation level and the dissemination process through the cumulative distribution of the number of packets received by nodes across the network. The ECDF shifts towards the right for higher values of $r$ and $v$ signifying improved dissemination of packets as nodes are more likely to get all packets. We show here that the mobility and the cooperation level ($r$ stands for the synergy factor) have both a great impact on the speed of dissemination process through the networks. The same phenomenon is observed in Fig. \ref{figquantile1}, and Fig.  \ref{figquantile2}, in both cases we shown how the set of packets pending to be delivered evolve as function of the velocity and the cooperation level. In Fig \ref{figquantile1} we show that the number of packet remaining to be delivered decrease as function of both R (the cooperation level), and the number of seeders. In Fig \ref{figquantile2} in the case of 30 seeders, we can  observe a sharp decrease of the number of packets pending to be delivered for different $r$ values, indeed in \ref{figquantile2}(a)  and \ref{figquantile2}(b) the number of packets  stay stable over different velocities but in \ref{figquantile2} (c) and \ref{figquantile2} (d) once a given value a the threshold value of $r$ is crossed the number of packet drops significantly and the information dissemination took place for almost all the nodes in the network.

\subsubsection{Simulation Code}
The code used in this paper is available at \cite{git}

\subsection{Comparison with Classical Game Theory based Approaches}\label{subsec:comp-cgt}
The motivation for the choice of an evolutionary game theory based approach was discussed earlier in section \ref{sec:intro}. To particularly highlight the benefits of our proposed model in the context of content downloading in vehicular networks, we draw comparison with approaches in existing literature that use classical game theory in similar deployments. In \cite{schwartz2011analysis}, a node computes the utility of each data file by taking into consideration the vehicle movement details such as the route and velocity and also characteristics of the data such as priority, size, age and geographical region. Two vehicles that encounter each other determine the set of packets to exchange using an algorithm based on Nash Bargaining. The same authors propose a similar approach in \cite{schwartz2012achieving} with the objective of achieving fairness in data dissemination. Shrestha et al. proposed a utility function based on packet priorities in \cite{shrestha2008wireless} and subsequently evaluated the performance of different bargaining algorithms.

A primary distinguishing feature of our proposed model with the above approaches is the localized nature of the information, both from the spatial and temporal perspectives. Not only is the information required limited to the immediate neighborhood, nodes are also assumed not to have any predetermined knowledge such as vehicle routes, packet priorities, etc. as all such information itself may be dynamic. Allowing myopic node decision making allows us to study how the network state determines its behavior in the succeeding time period. We envision this study to be useful from the point of view of planning for information dissemination in vehicular networks. For instance, in the scenario considered in section \ref{sec:infdissm-vanet}, seeder nodes may choose the set of packets to transmit depending on their immediate neighborhood, so as to maximize dissemination. Further, while the proposed model only takes into consideration packet reception status at nodes, it can be enhanced to accommodate other factors which can play a role in node decision making. For instance, factors such as residual energy may determine a node's forwarding behavior, which can be included in the model.

%
%
%
%
%
%
%
\section{Conclusion}
We introduce in this paper a framework based on Evolutionary Game Theory to study the evolution of node cooperation in wireless ad hoc networks. After introducing the general feature of the EGT, we proposed a model of node cooperation behavior that adapts Public Goods Games to the specifics of wireless ad hoc networks. Further on, we  described in which conditions networks could evolve and sustain a state of near to full cooperation among the devices. Our main contribution in this work is aimed at demonstrating the joint effects of the cooperation and the mobility rate on the spread of the information in wireless multi-hop networks. To fully take advantage of a rapid dissemination process in a rapidly changing environment we show that we need to exploit in tandem both the mobility and self-propagating property of the EGT to enable control of the node behavior. We also show how the success of the resulting diffusion process depends on the nature of the interactions between this two properties. Finally, we envision that the proposed model can be enhanced to develop a deeper understanding of cooperation in wireless networks by incorporating other parameters that impact node behavior such as energy consumption. Moreover, as the proposed model is centered around group interactions, the model can be used to study other forms of data delivery in wireless networks such as multi-hop unicast.

%
%
%
%
%
%
%

\bibliographystyle{IEEEtran}
\bibliography{pgg_winet_paper}
\end{document}